\documentclass[a4paper,10pt]{paper}
\usepackage{mathrsfs}
\usepackage{amsmath,amssymb,amsthm}
\usepackage{cite}
\usepackage{color}
\usepackage{hyperref}
\usepackage{tikz}

\textwidth=16cm
\textheight=22cm
\evensidemargin=-.4cm
\oddsidemargin=-.4cm

\newtheorem{theorem}{\bf Theorem}[section]
\newtheorem{lemma}{\bf Lemma}[section]

\newtheorem{proposition}{\bf Proposition}[section]
\numberwithin{equation}{section}

\newcommand{\bse}{\begin{subequations}}
\newcommand{\ese}{\end{subequations}}

\DeclareMathOperator{\tr}{Tr}
\def\undertilde#1{\mathord{\vtop{\ialign{##\crcr
$\hfil\displaystyle{#1}\hfil$\crcr\noalign{\kern1.5pt\nointerlineskip}
$\hfil\widetilde{}\hfil$\crcr\noalign{\kern-6.5pt}}}}}
\def\underhat#1{\mathord{\vtop{\ialign{##\crcr
$\hfil\displaystyle{#1}\hfil$\crcr\noalign{\kern1.5pt\nointerlineskip}
$\hfil\widehat{}\hfil$\crcr\noalign{\kern-6.5pt}}}}}
\def\underbar#1{\mathord{\vtop{\ialign{##\crcr
$\hfil\displaystyle{#1}\hfil$\crcr\noalign{\kern1.5pt\nointerlineskip}
$\hfil\bar{}\hfil$\crcr\noalign{\kern-6.5pt}}}}}

\def\wt#1{\widetilde{#1}}
\def\wh#1{\widehat{#1}}
\def\wb#1{\bar{#1}}
\def\th#1{\wh{\wt{#1}}}
\def\hb#1{\wb{\wh{#1}}}
\def\bt#1{\wt{\wb{#1}}}
\def\thb#1{\wb{\wh{\wt{#1}}}}
\def\ut#1{\undertilde{#1}}
\def\uh#1{\underhat{#1}}
\def\ub#1{\underbar{#1}}
\def\uth#1{\ut{\uh{#1}}}

\newcommand{\rT}{\mathrm{T}}
\newcommand{\rd}{\mathrm{d}}
\newcommand{\fL}{\mathfrak{L}}
\newcommand{\tfL}{{}^{t\!}{\mathfrak{L}}}
\newcommand{\fO}{\mathfrak{O}}
\newcommand{\fp}{\mathfrak{p}}
\newcommand{\fq}{\mathfrak{q}}

\newcommand{\bLd}{\mathbf{\Lambda}}
\newcommand{\tbLd}{{}^{t\!}\mathbf{\Lambda}}

\newcommand{\bo}{\mathbf{o}}
\newcommand{\tbo}{{}^{t\!}\mathbf{o}}
\newcommand{\bO}{\mathbf{O}}
\newcommand{\bOa}{\mathbf{\Omega}}

\newcommand{\bC}{\mathbf{C}}
\newcommand{\tbC}{{}^{t\!}\mathbf{C}}
\newcommand{\bU}{\mathbf{U}}
\newcommand{\tbU}{{}^{t\!}\mathbf{U}}
\newcommand{\bu}{\mathbf{u}}
\newcommand{\tbu}{{}^{t\!}\mathbf{u}}

\newcommand{\bv}{\mathbf{v}}
\newcommand{\tbv}{{}^{t\!}\mathbf{v}}
\newcommand{\bc}{\mathbf{c}}
\newcommand{\tbc}{{}^{t\!}\mathbf{c}}
\newcommand{\bA}{\mathbf{A}}

\newcommand{\bM}{\mathbf{M}}
\newcommand{\bI}{\mathbf{I}}
\newcommand{\bV}{\mathbf{V}}

\newcommand{\Oa}{\Omega}

\title{Direct linearising transform for three-dimensional discrete integrable systems: the lattice AKP, BKP and CKP equations
}
\author{Wei FU and Frank W NIJHOFF \\
 School of Mathematics, University of Leeds, Leeds LS2 9JT, UK 
}

\begin{document}

\maketitle

\begin{abstract}
A unified framework is presented for the solution structure of three-dimensional discrete integrable systems, including the lattice AKP, BKP and CKP equations. 
This is done through the so-called direct linearising transform which establishes a general class of integral transforms between solutions.
As a particular application, novel soliton-type solutions for the lattice CKP equation are obtained. 
\paragraph{Keywords:} direct linearising transform, discrete integrable system, 
discrete AKP equation, discrete BKP equation, discrete CKP equation, soliton 
\end{abstract}

\section{Introduction}\label{S:Intro}
Discrete integrable systems have played an increasingly prominent role in both mathematics and physics during the past decades, cf. e.g. \cite{HJN16}. 
The theory of discrete integrable systems has made important contributions to other areas of classical and modern mathematics, such as algebraic 
geometry, discrete geometry, Lie algebras, cluster algebras, orthogonal polynomials, quantum theory, special functions, 
and random matrix theory. 

Discrete equations in a sense play the role of master equations in integrable systems theory. They encode entire hierarchies of corresponding continuous 
equations and can be understood as the Bianchi permutability property and B\"acklund transforms of both continuous and discrete equations. Furthermore, 
they have quite significance in their own right, in view of the rich algebraic structure behind them. 

A key feature of the integrability of discrete systems is the phenomenon of \emph{multi-dimensional consistency} (MDC) -- the property that a lattice 
equation can be consistently extended to a family of equations by introducing an arbitrary number of  discrete independent variables with their 
corresponding lattice parameters, cf. \cite{DS97,NW01}. The MDC property was later employed to classify scalar affine-linear quadrilateral equations 
\cite{ABS03} and octahedral equations \cite{ABS12} by Adler, Bobenko and Suris. 

At the current stage, three-dimensional (3D) integrable lattice equations are often considered as the most general models in discrete integrable systems 
theory -- two-dimensional (2D) and one-dimensional (1D) discrete integrable systems can normally be obtained from dimensional reductions of 3D equations. 
In fact, the algebraic/solution structures behind 3D lattice equations are much richer and sometimes they provide us with insights into the study of 
discrete integrable systems, while on the 2D/1D level a lot of key information collapses and the integrability sometimes cannot be easily figured out. 
Under the assumption of the MDC, there are three important scalar models of Kadomtsev--Petviashvili-type (KP) in the 3D theory, namely, the lattice AKP, 
BKP and CKP equations, which are the discretisations of the famous continuous (potential) AKP, BKP and CKP equations: 
\begin{align}
 &4u_{xt}-3u_{yy}-(u_{xxx}+6u_x^2)_x=0, \tag{AKP} \\
 &9u_{xt}-5u_{yy}+(-5u_{xxy}-15u_xu_y+u_{xxxxx}+15u_xu_{xxx}+15u_x^3)_x=0, \tag{BKP} \\
 &9u_{xt}-5u_{yy}+(-5u_{xxy}-15u_xu_y+u_{xxxxx}+15u_xu_{xxx}+15u_x^3+\tfrac{45}{4}u_{xx}^2)_x=0. \tag{CKP}
\end{align}
Here the letters `A', `B' and `C' refer to the different types of infinite-dimensional Lie algebras which are associated with their respective hierarchies 
following the work of the Kyoto school, cf. a review paper \cite{JM83} and references therein for their original research papers. The lattice AKP equation 
was first given by Hirota \cite{Hir81} in the discrete bilinear form and therefore it is also known as the Hirota equation (also referred to as the 
Hirota--Miwa equation due to Miwa's reparametrisation \cite{Miw82} for its soliton solution). The lattice AKP equation is related to other nonlinear forms 
which can be referred to as the lattice KP equation \cite{NCWQ84,NCW85}, the lattice modified KP (mKP) equation \cite{DJM82b,NCW85} as well as the lattice 
Schwarzian KP (SKP) equation -- a lattice equation in the form of a multi-ratio (see \cite{DN91}). The lattice BKP equation was derived by Miwa \cite{Miw82} 
(therefore also referred to as the Miwa equation) as a four-term bilinear equation and its nonlinear form in terms of multi-ratios was later given by Nimmo 
and Schief \cite{NS97} (see also \cite{NS98}). The lattice CKP equation was obtained from the star-triangle transform in the Ising model by Kashaev 
\cite{Kas96} based on the idea of nonlocal Yang--Baxter maps \cite{MN89}. It was named CKP by Schief \cite{Sch03} who revealed that Kashaev's lattice model 
is the superposition property of the continuous CKP equation. 

We note that the MDC property of the Hirota--Miwa equation and the Miwa equation, i.e. the lattice AKP and BKP equations, can be proven by direct computation, 
however, the MDC property of the lattice CKP equation (as an equation in multi-quadratic form) is highly nontrivial and it was confirmed in \cite{TW09}. 
Alternatively, Atkinson established the MDC property of the lattice CKP equation by using discriminant factorisation \cite{Atk11}. The reductions of the 
lattice KP-type equations give rise to a large number of lower-dimensional integrable models (cf. e.g. \cite{HJN16} and references therein). In particular, 
the ultradiscretisations of the reduced 2D integrable lattice models as well as Yang--Baxter maps can be obtained from them \cite{KNW09,KNW10}. 

The lattice AKP, BKP and CKP equations have been considered from the perspective of the underlying geometry in several papers, cf. Konopelchenko and Schief 
\cite{KS02a,KS02b,Sch03}, Doliwa \cite{Dol07,Dol10a,Dol10b} and also Bobenko and Schief \cite{BS15,BS17}. In the present paper, we propose a unified framework 
for the solution structure of these equations. The latter will comprise the structure of soliton-type solutions and those related to nonlocal Riemann--Hilbert 
problems. As a particular by-product, we obtain novel soliton solutions to the lattice CKP equation. The approach we adopt is the \emph{direct linearisation} 
(DL) method, which was proven very effective in establishing solution structures of many integrable equations and their interrelations, cf. e.g. 
\cite{FA81,FA83,NQC83,NCWQ84,NCW85}. 

The starting point in the DL is a linear integral equation. In the 3D case, we need a singular nonlocal integral equation which reads
\begin{align}\label{IntEq}
 \bu_k+\iint_{D}\rd \zeta(l,l')\rho_k\Omega_{k,l'}\sigma_{l'}\bu_l=\rho_k\bc_k,
\end{align}
where $\rd \zeta(l,l')$ is a certain measure for the double integral on an integration domain $D$ in the space of the spectral variables $l$ and $l'$, 
and the wave function $\bu_k$ is an infinite vector with its entries as functions of discrete dynamical variables as well as the spectral variable 
$k$ and $\bc_k$ is also an infinite vector with its $i$th-component $k^i$. $\Omega_{k,l'}$ is called the Cauchy kernel while $\rho_k$ and $\sigma_{l'}$ 
are the plane wave factors. The key point in this approach is that the measure $\rd\zeta(l,l')$ depends on $l$ and $l'$ and therefore a double integral 
must be involved. Once the measure collapses the integral equation turns out to be a local Riemann--Hilbert problem and 2D integrable lattice equations arise. 

A more general form of the DL is the \emph{direct linearising transform} (DLT), cf. \cite{Nij85a,NC90}, which plays the role of a dressing operation 
in the framework. Concretely, the DLT is a linear transform that maps a solution of the linear problem to another and simultaneously its nonlinear 
counterpart also brings a solution of the nonlinear equation to a new solution, i.e. one can start from a seed and then generate more and more complicated 
solutions by iteration. Particularly, when the free wave is chosen to be the seed of the linear problem, the problem turns out to be the integral equation 
\eqref{IntEq} which brings us the nonlinear equation together with its solutions.

We establish a general DLT for 3D lattice equations through a structure of infinite matrices which provide a natural framework for extracting the 
relevant quantities in the theory. The kernel of the integral transform, which is defined as the sum of a lattice 1-form over a path in the lattice, 
satisfies itself an integral equation which in turn implies that the DLT for 3D lattice equations obeys a group-like property tantamount to the integrability 
of the scheme. The crucial aspect of the path-independence of the kernel is equivalent to a closure relation in our infinite matrix structure; it is this 
closure relation that forms the key condition on the construction of various 3D discrete integrable systems. As the three prominent scalar 3D integrable 
lattice equations, the lattice AKP, BKP and CKP equations emerge from the scheme together with their associated linear problems. Soliton solutions 
are obtained by specifying appropriate integration measures and domains. 

The paper is organised as follows: In Section \ref{S:DLT} we establish the DLT for general 3D discrete integrable systems and its relation to the DL. 
Sections \ref{S:AKP}, \ref{S:BKP} and \ref{S:CKP} are contributed to the lattice AKP, BKP and CKP equations which are corresponding to the three 
particular cases under the general framework of the DLT. In Section \ref{S:Soliton}, soliton solutions to the lattice AKP, BKP and CKP equations 
are given from the scheme.

\section{A general framework: Direct linearising transform}\label{S:DLT}

\subsection{Infinite matrices and vectors}
Infinite matrices and vectors are the key ingredients in the framework. A matrix/vector is normally understood as an array of numbers, symbols, etc. 
In fact, such a visualisation is not necessary for describing the essential structure of matrices and vectors, we here present an alternative 
presentation which is more suitable for the purpose of this paper. This involves the use of infinite ``centred'' matrices which are built by means of 
objects (generators) $\bLd$, $\tbLd$ and $\bO$ in an associative algebra (with unit) $\mathcal{A}$ over a field $\mathcal{F}$, obeying the relation 
\begin{align}\label{InfMatRule}
 \bO\cdot\tbLd^i\cdot\bLd^j\cdot\bO=\delta_{-i,-j}\bO, \quad i,j\in\mathbb{Z},
\end{align}
where the powers $\tbLd^i$ and $\bLd^j$ are the $i$th and $j$th compositions of $\tbLd$ and $\bLd$ respectively, and $\delta_{\cdot,\cdot}$ is the usual 
Kronecker symbol. In general, $\bLd$, $\tbLd$ and $\bO$ do not commute, but from \eqref{InfMatRule} it can be seen that $\bLd$ and $\tbLd$ act as each other's 
transpose, while $\bO$ is a projector satisfying $\bO^2=\bO$. An, in general, infinite matrix $\bU$ is defined as
\begin{align}\label{InfMat}
 \bU=\sum_{i,j\in\mathbb{Z}}U_{i,j}\bLd^{-i}\cdot\bO\cdot\tbLd^{-j},
\end{align}
where the coefficients $U_{i,j}$ take values in the field $\mathcal{F}$. The $U_{i,j}$ can be understood as the $(i,j)$-entry in the infinite matrix and 
$\{\bLd^{-i}\cdot\bO\cdot\tbLd^{-j}|i,j\in\mathbb{Z}\}$ forms a basis. In particular, $\bO$ itself can be visualised as an infinite matrix having the 
$(0,0)$-entry $1$ and all the other entries zero. Following from \eqref{InfMatRule}, we can show that operations of these infinite matrices such as addition, 
multiplication of two infinite matrices as well as scalar multiplication of an infinite matrix by $\fp\in\mathcal{F}$ obey the following rules: 
\begin{align*}
 &\bU+\bV=\sum_{i,j\in\mathbb{Z}}(U_{i,j}+V_{i,j})\bLd^{-i}\cdot\bO\cdot\tbLd^{-j}, \quad 
 \fp\,\bU=\sum_{i,j\in\mathbb{Z}}(\fp\, U_{i,j})\bLd^{-i}\cdot\bO\cdot\tbLd^{-j}, \\
 &\bU\cdot\bV=\sum_{i,j\in\mathbb{Z}}\Big(\sum_{i'\in\mathbb{Z}}U_{i,i'}V_{i',j}\Big)\bLd^{-i}\cdot\bO\cdot\tbLd^{-j}.
\end{align*}
In addition, the transpose of $\bU$ is defined as $\tbU=\sum_{i,j\in\mathbb{Z}}U_{i,j}\bLd^{-j}\cdot\bO\cdot\tbLd^{-i}$. The action $(\,\cdot\,)_{0,0}$ 
on an arbitrary infinite matrix defines the centre of the matrix $\bU$ by taking the coefficient of $\bLd^0\cdot\bO\cdot\tbLd^0$, i.e. 
$(\bLd^{i}\cdot\bU\cdot\tbLd^{j})_{0,0}=U_{i,j}$. In particular, we have the following formula: 
\begin{align*}
 (\bLd^{i_1}\cdot\bU\cdot\tbLd^{j_1}\cdot\bO\cdot\bLd^{i_2}\cdot\bU\cdot\tbLd^{j_2})_{0,0}=U_{i_1,j_1}U_{i_2,j_2}. 
\end{align*}
In fact, from the definition \eqref{InfMat}, one can also observe that $\bLd$ and $\tbLd$ play the role as index-raising operators from the left and from 
the right respectively, as we have
\begin{align*}
\bLd^{i'}\cdot\bU\cdot\tbLd^{j'}=\sum_{i,j\in\mathbb{Z}}U_{i+i',j+j'}\bLd^{-i}\cdot\bO\cdot\tbLd^{-j}.
\end{align*}
We note that the definition \eqref{InfMat} also covers finite matrices by restricting the number of non-zero coefficients to a finite number, i.e. 
$U_{i,j}=0$ for $i\neq 1,2,\cdots,N$ or $j\neq 1,2,\cdots,N'$ for some given $N$ and $N'$, resulting in an $N\times N'$ finite matrix given by 
$\bU=\sum_{i=1,j=1}^{N,N'}U_{i,j}\bLd^{-i}\cdot\bO\cdot\tbLd^{-j}$. 

Following from the same idea, we also introduce infinite vectors as follows. Suppose $\bo$ and $\tbo$ are two objects obeying the relations
\begin{align}\label{InfVecRule}
 \tbo\cdot\tbLd^i\cdot\bLd^j\cdot\bo=\delta_{-i,-j}, \quad \bo\cdot\tbo=\bO,
\end{align}
where $i,j\in\mathbb{Z}$ and objects $\bo$, $\tbo$, $\bLd$ and $\tbLd$ do not commute with each other in general. An infinite vector $\bu$ and its 
transpose $\tbu$ are defined as
\begin{align}\label{InfVec}
 \bu=\sum_{i\in\mathbb{Z}}u_i\bLd^{-i}\cdot\bo, \quad \tbu=\sum_{i\in\mathbb{Z}}u_i\tbo\cdot\tbLd^{-i},
\end{align}
where $u_i$ are elements in the same field $\mathcal{F}$, which implies $\bo$ and $\tbo$ are a particular infinite ``centred'' vector and its transpose 
that have the $0$th-component (i.e. the centre) $1$ and all the other components zero. The sets $\{\bLd^{-i}\cdot\bo|i\in\mathbb{Z}\}$ and 
$\{\tbo\cdot\tbLd^{-i}|i\in\mathbb{Z}\}$ form the respective bases for infinite vectors and their transpose. For two arbitrary infinite vectors 
$\bu=\sum_{i\in\mathbb{Z}}u_i\bLd^{-i}\cdot\bo$ and $\bv=\sum_{j\in\mathbb{Z}}v_j\bLd^{-j}\cdot\bo$ where $u_i,v_j$ are elements from the field $\mathcal{F}$ 
and for arbitrary $\fp$ also from $\mathcal{F}$, we have the basic operations for infinite vectors as follows:
\begin{align*}
 &\bu+\bv=\sum_{i\in\mathbb{Z}}(u_i+v_i)\bLd^{-i}\cdot\bo, \quad  
 \fp\,\bu=\sum_{i\in\mathbb{Z}}(\fp\, u_i)\bLd^{-i}\cdot\bo, \\
 &\bu\,\tbv=\sum_{i,j\in\mathbb{Z}}u_iv_j\bLd^{-i}\cdot\bO\cdot\tbLd^{-j}, \quad 
 \tbv\,\bu=\sum_{i\in\mathbb{Z}}u_iv_i,
\end{align*}
as they can be derived immediately by considering the rules in \eqref{InfVecRule}. The operations obey the same rules (except $i,j$ are now chosen from 
$\mathbb{Z}$) as those for finite vectors. In fact, the case of $N$-component (column and row) vectors is obtained by the restriction $u_i=0$ for 
$i\neq 1,2,\cdots,N$, leading to $\bu=\sum_{i=1}^N u_i\bLd^{-i}\cdot\bo$ and $\tbu=\sum_{i}^Nu_i\tbo\cdot\tbLd^{-i}$.
For infinite vectors, we define the action $(\,\cdot\,)_0$ by taking the coefficient of $\bLd^0\cdot\bo$ and of $\tbo\cdot\tbLd^0$ for $\bu$ and $\tbu$ 
respectively, namely $(\bLd^{i}\cdot\bu)_0=u_i$ and $(\tbu\cdot\tbLd^{j})_0=u_j$. The action of $\bLd$ and $\tbLd$ as index-raising operators can be most 
clearly seen by introducing monomial eigenvectors $\bc_k$ and $\tbc_{k'}$ defined by 
\begin{align}\label{ck}
 \quad \bLd\cdot\bc_k=k\bc_k, \quad \tbc_{k'}\cdot\tbLd=k'\tbc_{k'} \quad \hbox{and} \quad (\bc_k)_0=1, \quad (\tbc_{k'})_0=1,
\end{align}
from which it implies that the $i$th-components for the eigenvectors are $k^i$ and $k'^i$, respectively. The monomial basis allows us to encode $k$- and 
$k'$-dependent objects that occur in the integral equation \eqref{IntEq} in terms of infinite matrices and vectors and work with them conveniently in 
a formal manner. 

Following the rules \eqref{InfMatRule} and \eqref{InfVecRule}, the multiplication of an infinite matrix and an infinite vector is also well-defined and 
it obeys exactly the same rule as that in the theory of finite matrices and vectors. Some examples involving both infinite matrices and vectors are 
given below: 
\begin{align*}
 (\bLd^{i_1}\cdot\bU\cdot\tbLd^{j_1}\cdot\bO\cdot\bLd^{i_2}\cdot\bu)_{0,0}=U_{i_1,j_1}u_{i_2}, \quad 
 \tbc_{k'}\cdot\tbLd^{j}\cdot\bO\cdot\bLd^{i}\cdot\bc_k=k^{i}k'^{j}.
\end{align*}
In the approach we also need the notion of trace and determinant of an infinite matrix, but only in the case when they involve the projector $\bO$. 
Thus we have for the trace $\tr(\bU\cdot\bO)=\tr(\bO\cdot\bU)=(\bU)_{0,0}$ and for the determinant 
\begin{align}\label{WARank1}
 \det(1+\bU\cdot\bO\cdot\bV)=\det(1+\bU\cdot\bo\cdot\tbo\cdot\bV)=1+\tbo\cdot\bV\cdot\bU\cdot\bo=1+(\bV\cdot\bU)_{0,0},
\end{align}
where we treat $\bU\cdot\bo$ and $\tbo\cdot\bV$ as an infinite column vector and an infinite row vector respectively and \eqref{InfMatRule} and 
\eqref{InfVecRule} are used. This identity is a version of the well-known Weinstein--Aronszajn formula in rank 1 case, evaluating the determinant 
of an infinite matrix in terms of a scalar quantity.

\subsection{Direct linearising transform for 3D lattice equations}

We consider an infinite-dimensional regular lattice spanned by discrete coordinates (called the lattice variables) $n_1,n_2,n_3,\cdots$ (for the 
sake of this paper, taking integer values), each of which is associated with a corresponding lattice parameter $p_\gamma$ taking values in the reals 
or the complex numbers (we can think of $p_\gamma$ representing a grid width parameter in the $n_\gamma$-direction of the lattice). A discrete shift 
operator is defined as follow. Suppose $f=f(n_1,n_2,n_3,\cdots)$ is a function of the discrete independent variables $\{n_1,n_2,n_3,\cdots\}$, 
the discrete forward shift operator $\rT_{p_\gamma}$ and the discrete backward shift operator $\rT_{p_\gamma}^{-1}$ are respectively defined by 
\begin{align*}
 \rT_{p_\gamma} f\doteq f(n_1,n_2,\cdots,n_\gamma+1,\cdots), \quad \rT_{p_\gamma}^{-1} f\doteq f(n_1,n_2,\cdots,n_\gamma-1,\cdots).
\end{align*}
For a 3D discrete integrable system, we only need three discrete independent variables. Thus we can choose 
an arbitrary triplet $(n_\alpha,n_\beta,n_\gamma)$ from $\{n_1,n_2,n_3,\cdots\}$ as the discrete variables and $(p_\alpha,p_\beta,p_\gamma)$ is the 
corresponding triplet for the associated lattice parameters. 

We consider a family of linear equations taking the form of 
\begin{align}\label{ukDyn}
 \rT_\fp\bu_k=(\fL_\fp-\rT_\fp\bU\cdot\fO_p)\cdot\bu_k, \quad \tbu_{k'}=\rT_\fp\tbu_{k'}\cdot(\tfL_\fp+\fO_\fp\cdot\bU), \quad
 \fp\in\{p_\gamma|\gamma\in\mathbb{Z}^+\},
\end{align}
where the wave functions $\bu_k$ and $\tbu_{k'}$ are an infinite column vector and an infinite row vector respectively composed of corresponding entries 
$u_k^{(i)}$ and $u_{k'}^{(j)}$ as functions of the discrete independent variables $\{n_1,n_2,n_3,\cdots\}$ and of certain spectral parameters $k$ and $k'$ 
which take values in the complex numbers (we prefer to denote the dependence on $k$ and $k'$ in $\bu_k$, $\tbu_{k'}$ by suffices in order to make it 
more visible). The potential $\bU$ is an infinite matrix with entries $U_{i,j}$ as certain functions of only the discrete independent variables 
$\{n_1,n_2,n_3,\cdots\}$. The operators $\fL_\fp$ and $\tfL_\fp$ are expressions of $\bLd$ and $\fp$ and $\tbLd$ and $\fp$ respectively and $\fO_\fp$ is 
an expression of the operators $\bLd$, $\tbLd$ and $\bO$ associated with the lattice parameter $\fp$. We note that while the $\fL_\fp$ commute among 
themselves for arbitrary $\fp$, and so do the $\tfL_\fp$, they do not commute with the $\fO_\fp$. This follows from the commutativity and noncommutativity 
of $\bLd$, $\tbLd$ and $\bO$. 

It is not obvious that \eqref{ukDyn} constitutes a compatible system of linear equations, i.e. they can be imposed simultaneously on one and the same 
function $\bu_k$ or $\tbu_{k'}$ respectively. The compatibility of the system depends sensitively on the choice of $\fL_\fp$, $\tfL_\fp$ and $\fO_\fp$. 
The aim is to find choices of these operators for which the compatibility holds and the framework that we set up in this subsection gives necessary 
conditions for selecting such operators. Furthermore, the requirement of compatibility also imposes conditions on the coefficient matrix $\bU$, namely 
for each triplet $(p_\alpha,p_\beta,p_\gamma)$, the compatibility conditions $\rT_{p_\alpha}\rT_{p_\beta}\bu_k=\rT_{p_\beta}\rT_{p_\alpha}\bu_k$ and 
$\rT_{p_\beta}\rT_{p_\gamma}\bu_k=\rT_{p_\gamma}\rT_{p_\beta}\bu_k$ leads to a formal 3D lattice system for the infinite matrix $\bU$ with entries as 
functions of the triplet $(n_\alpha,n_\beta,n_\gamma)$ chosen from the $\{n_1,n_2,n_3,\cdots\}$: 
\bse\label{MatEq}
\begin{align}
 &(\fL_{p_\alpha}-\rT_{p_\beta}\rT_{p_\alpha}\bU\cdot\fO_{p_\alpha})\cdot(\fL_{p_\beta}-\rT_{p_\beta}\bU\cdot\fO_{p_\beta})
 =(\fL_{p_\beta}-\rT_{p_\alpha}\rT_{p_\beta}\bU\cdot\fO_{p_\beta})\cdot(\fL_{p_\alpha}-\rT_{p_\alpha}\bU\cdot\fO_{p_\alpha}), \\
 &(\fL_{p_\beta}-\rT_{p_\gamma}\rT_{p_\beta}\bU\cdot\fO_{p_\beta})\cdot(\fL_{p_\gamma}-\rT_{p_\gamma}\bU\cdot\fO_{p_\gamma})
 =(\fL_{p_\gamma}-\rT_{p_\beta}\rT_{p_\gamma}\bU\cdot\fO_{p_\gamma})\cdot(\fL_{p_\beta}-\rT_{p_\beta}\bU\cdot\fO_{p_\beta}).
\end{align}
\ese
At this juncture we need to point out that \eqref{MatEq} can only be considered as a closed-form lattice equation in the sense of the infinite matrix $\bU$, 
but it is not closed-form for individual entries in the matrix, since the operators $\fL_\fp$, $\tfL_\fp$ and $\fO_{\fp}$ in general contain the index-raising 
operators $\bLd$ and $\tbLd$. Thus, in order to obtain closed-form 3D lattice equations for the entries of $\bU$ the latter operators must be eliminated. 
In Sections \ref{S:AKP}, \ref{S:BKP} and \ref{S:CKP}, we specify the operators $\fL_\fp$, $\tfL_\fp$ and $\fO_{\fp}$ and derive the corresponding closed-form 
lattice equations. What we do here is to set up the solution structure behind these equations which will make transparent the integrability in the sense of 
the MDC property. 

Let us explain what we mean by the MDC property in this context. Consider an, in principle infinite, family of equations arising from \eqref{MatEq}, each 
involving a triplet $(p_\alpha,p_\beta,p_\gamma)$ as lattice parameters and an associated triplet $(n_\alpha,n_\beta,n_\gamma)$ as lattice variables 
arbitrarily chosen from the infinite sets of lattice parameters and variables $\{p_\gamma|\gamma\in\mathbb{Z}^+\}$ and $\{n_\gamma|\gamma\in\mathbb{Z}^+\}$, 
the MDC is the property that this family admits a nontrivial common solution involving an arbitrary number of free parameters (different from the lattice 
parameters). In the following, we show that the MDC property is only guaranteed when $\fL_\fp$, $\tfL_\fp$ and $\fO_\fp$ satisfy certain conditions which 
are governed by the DLT. 

We introduce the DLT, namely, a linear transform from the given wave functions $\bu_k^0,\tbu_{k'}^0$ to new wave functions $\bu_k,\tbu_{k'}$ in the form of 
\begin{align}\label{ukDLT}
 &\bu_k=\bu_k^0-\iint_D\rd\zeta(l,l')G_{k,l'}^0\bu_l, \quad \tbu_{k'}=\tbu_{k'}^0-\iint_D\rd\zeta(l,l')G_{l,k'}\tbu_{l'}^0,
\end{align}
in which $D$ is a suitable integration domain and $\rd \zeta(l,l')$ is the measure. $G_{k,l'}^0$ and $G_{l,k'}$ are the kernel of the DLT that must 
obey a certain condition which will be given later. The idea of the DLT is that the transform must preserve the structure of the linear equations 
in \eqref{ukDyn}, i.e. if we start from an old solution $\bu_k^0$, $\tbu_{k'}^0$ to \eqref{ukDyn} together with $\bU^0$, the new $\bu_k$ and 
$\tbu_{k'}$ that follow from the linearising transform \eqref{ukDLT} must satisfy the same linear problem associated with a new $\bU$. Such a requirement 
leads that the kernel $G$ must obey 
\begin{align}\label{GDyn}
 (\rT_\fp-1)G_{k,l'}^0=(\rT_\fp\tbu_{l'}^0)\cdot\fO_\fp\cdot\bu_k^0, \quad (\rT_\fp-1)G_{l,k'}=(\rT_\fp\tbu_{k'})\cdot\fO_\fp\cdot\bu_l
\end{align}
for any $\fp\in\{p_\gamma|\gamma\in\mathbb{Z}^+\}$, which implies that it is defined as 
\begin{align}\label{G}
 G_{k,l'}^0=\sum_{\Gamma_{P\rightarrow X}}(\rT_{\fp}\tbu_{l'}^0)\cdot\fO_{\fp}\cdot\bu_k^0, \quad
 G_{l,k'}=\sum_{\Gamma_{P\rightarrow X}}(\rT_{\fp}\tbu_{k'})\cdot\fO_{\fp}\cdot\bu_l,
\end{align}
where $\Gamma$ is an arbitrary path from the initial point $P$ to the ending point $X$ in the lattice and the path must follow the lattice directions 
(cf. Figure \ref{F:Path} as a particular case when the path is sitting in a 3D lattice spanned by the coordinates $n_\alpha$, $n_\beta$ and $n_\gamma$). 
Simultaneously there is also a nonlinear counterpart of the DLT \eqref{ukDLT} that must hold, namely, a transform between the old $\bU^0$ and the new $\bU$, 
which is given by 
\begin{align}\label{UDLT}
 \bU-\bU^0=\iint_D\rd\zeta(l,l')\bu_l\tbu_{l'}^0.
\end{align}
The transform \eqref{UDLT} together with the equations in \eqref{ukDyn} also implies that $\bU$ must obey
\begin{align}\label{UU0Dyn}
 (\rT_\fp\bU-\rT_\fp\bU^0)\cdot(\tfL_\fp+\fO_\fp\cdot\bU^0)=\fL_\fp\cdot(\bU-\bU^0)-(\rT_\fp\bU)\cdot\fO_\fp\cdot(\bU-\bU^0).
\end{align}
The relation \eqref{UU0Dyn} is a nonlinearised version of \eqref{ukDyn} and it describes how the nonlinear quantities 
$\bU$ and $\bU^0$ evolve along the lattice directions. And it can also bring us closed-form 3D lattice systems by selecting a certain entry 
(for scalar 3D lattice equations) or several certain entries (for coupled 3D lattice systems) of the infinite matrix $\bU$ once the initial 
condition $\bU^0$ is given -- the simplest choice is $\bU^0=\mathbf{0}$ as we can easily observe that from \eqref{MatEq}. 

\begin{figure}[!h]
\centering
\begin{tikzpicture}[scale=0.15]
 \node at (0,-3) {$n_\alpha$};
 \node at (17.5,4.5) {$n_\beta$};
 \node at (5.5,17) {$n_\gamma$};
 \draw[very thick,->] (3,3) -- (-3,-3);
 \draw[very thick,->] (3,3) -- (18,3);
 \draw[very thick,->] (3,3) -- (3,18);
 \coordinate (A1) at (0, 0);
 \coordinate (A2) at (10, 0);
 \coordinate (A3) at (13, 3);
 \coordinate (A4) at (3, 3);
 \coordinate (B1) at (0, 10);
 \coordinate (B2) at (10, 10);
 \coordinate (B3) at (13, 13);
 \coordinate (B4) at (3, 13);
 \draw[very thick] (A1) -- (A2);
 \draw[very thick] (A2) -- (A3);
 \draw[dashed,very thick] (A3) -- (A4);
 \draw[dashed,very thick] (A4) -- (A1);
 \draw[very thick] (B1) -- (B2);
 \draw[very thick] (B2) -- (B3);
 \draw[very thick] (B3) -- (B4);
 \draw[very thick] (B4) -- (B1);
 \draw[very thick] (A1) -- (B1);
 \draw[very thick] (A2) -- (B2);
 \draw[very thick] (A3) -- (B3);
 \draw[dashed,very thick] (A4) -- (B4);
 \foreach \x in {0,2,...,10}
 \foreach \y in {0,2,...,10}
  {
  \draw[dashed] (0+0.3*\y,\x+0.3*\y) -- (10+0.3*\y,\x+0.3*\y);
  \draw[dashed] (0+\x,0+\y) -- (3+\x,3+\y);
  \draw[dashed] (0+\x+0.3*\y,0+0.3*\y) -- (0+0.3*\y+\x,10+0.3*\y);
  }
 \draw[red,very thick] (3,3) -- (2.4,2.4) -- (4.4,2.4) -- (4.4,5.4) -- (3.8,4.8) -- (5.8,4.8) -- (5.8,6.8) -- (5.2,6.2) -- (7.2,6.2) --
 (7.2,7.2) -- (6,6) -- (8,6) -- (8,9) -- (10,9) -- (10,10);
 \fill [red] (3,3) circle (10pt);
 \fill [red] (10,10) circle (10pt);
 \node [red] at (1,4) {$P$};
 \node [red] at (11.5,9) {$X$};
\end{tikzpicture}
\caption{An arbitrary path $\Gamma$ in the 3D lattice space}
\label{F:Path}
\end{figure}

The equations in \eqref{ukDLT} together with \eqref{UDLT} constitute a dressing method for solving the 3D lattice nonlinear equation \eqref{MatEq}: 
We can start from a seed of the nonlinear equation, say $\bU^0$, then \eqref{ukDyn} can be solved and we obtain its seed solution $\bu_{k}^0$ and 
$\tbu_{k'}^0$. Next the kernel $G_{k,l'}^0$ can be calculated from \eqref{G}. Then the linearising transform \eqref{ukDLT} gives us the new solution 
$\bu_k$ to the linear problem and consequently the new solution $\bU$ to the nonlinear equation can be worked out from \eqref{UDLT}. We can repeat 
the procedure from the obtained new solution and create more and more complicated solutions by iteration. The iteration can be illustrated in the 
diagram as follow: 
\begin{align}\label{Dress}
\begin{array}{ccccccc}
 \bU^0 & {} & \bU^1 & {} & \bU^2 & {} & {} \\
 \downarrow & {} & \uparrow & {} & \uparrow & {} & {} \\
 \bu_k^0,\tbu_{k'}^0 & \longrightarrow & \bu_k^1 & \longrightarrow & \bu_k^2 & \longrightarrow & \cdots.
\end{array}
\end{align}

The DLT procedure including \eqref{ukDLT} and \eqref{UDLT} provides a way of generating solutions to \eqref{MatEq}, starting from an initial solution. 
In order to find nontrivial solutions admitting infinitely many degrees of freedom which can solve all the discrete equations for $\bU$ in the family 
(i.e. the MDC property), we must have that the kernel $G$ in the DLT given by \eqref{G} is path-independent, namely a closure relation at each stage in 
the dressing: 
\begin{align*}
 (\rT_\fq-1)(\rT_\fp-1)G_{k,l'}=(\rT_\fp-1)(\rT_\fq-1)G_{k,l'}, \quad \forall \fp,\fq\in\{p_\gamma|\gamma\in\mathbb{Z}^+\}.
\end{align*}
Following from \eqref{GDyn} and making use of \eqref{ukDyn}, one can from the above relation derive a closure relation for $\fL_\cdot$, $\tfL_\cdot$ and 
$\fO_\cdot$ by eliminating $\rT_{\fp}\bu_k$ and $\rT_{\fq}\bu_k$ (but reserve $\rT_{\fp}\rT_{\fq}\bu_k$ and $\bu_k$). The statement can be made as follow: 
\begin{proposition}\label{T:Closure}
A 3D lattice equation from \eqref{MatEq} possesses the MDC property if the operators $\fL_\cdot$, $\tfL_\cdot$ and $\fO_\cdot$ obey the closure relation 
\begin{align}\label{Closure}
\fO_\fp\fL_\fq-\tfL_\fq\fO_\fp=\fO_\fq\fL_\fp-\tfL_\fp\fO_\fq, \quad \forall \fp,\fq\in\{p_\gamma|\gamma\in\mathbb{Z}^+\}.
\end{align}
\end{proposition}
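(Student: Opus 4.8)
The plan is to show that the closure relation \eqref{Closure} is exactly the algebraic condition that makes the kernel $G$ path-independent; since the discussion preceding the statement reduces the existence of a common solution $\bU$ of the whole family \eqref{MatEq} (the MDC property) to this path-independence, establishing the latter suffices. Concretely, I aim to verify
\[
(\rT_\fq-1)(\rT_\fp-1)G_{k,l'}=(\rT_\fp-1)(\rT_\fq-1)G_{k,l'}
\]
for every pair $\fp,\fq$, together with the analogous identity for $G_{l,k'}$. The two kernels in \eqref{GDyn} are structural mirror images, so it is enough to carry out the computation for $G_{k,l'}$ and then transcribe it; moreover the same computation applies at every stage of the dressing \eqref{Dress}, since each wave function obeys \eqref{ukDyn} relative to its own potential.

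First I would start from the defining relation \eqref{GDyn}, namely $(\rT_\fp-1)G_{k,l'}=(\rT_\fp\tbu_{l'})\cdot\fO_\fp\cdot\bu_k$, and apply the second difference $(\rT_\fq-1)$. Since the lattice shifts commute and the operators $\fL_\fp,\tfL_\fp,\fO_\fp$ depend only on the generators $\bLd,\tbLd$ and the parameter $\fp$ (and not on the lattice variables), every $\rT$ passes freely through them, giving
\[
(\rT_\fq-1)(\rT_\fp-1)G_{k,l'}=(\rT_\fp\rT_\fq\tbu_{l'})\cdot\fO_\fp\cdot(\rT_\fq\bu_k)-(\rT_\fp\tbu_{l'})\cdot\fO_\fp\cdot\bu_k,
\]
with the reversed-order expression obtained by interchanging $\fp\leftrightarrow\fq$.

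The crucial step is then to use the linear equations \eqref{ukDyn} to rewrite both sides as one and the same bilinear form sandwiching an operator between a fixed pair of wave functions. For the column factor I would substitute $\rT_\fq\bu_k=(\fL_\fq-\rT_\fq\bU\cdot\fO_\fq)\cdot\bu_k$, trading the single shift $\rT_\fq\bu_k$ for the unshifted $\bu_k$; for the row factor I would apply $\rT_\fp$ to the $\fq$-version of the row equation, $\tbu_{l'}=\rT_\fq\tbu_{l'}\cdot(\tfL_\fq+\fO_\fq\cdot\bU)$, obtaining $\rT_\fp\tbu_{l'}=(\rT_\fp\rT_\fq\tbu_{l'})\cdot(\tfL_\fq+\fO_\fq\cdot\rT_\fp\bU)$, which promotes the single shift to the double shift $\rT_\fp\rT_\fq\tbu_{l'}$ without requiring any inverse, the row equation running conveniently in the right direction. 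After these substitutions the whole expression collapses to the shape $(\rT_\fp\rT_\fq\tbu_{l'})\cdot(\,\cdots\,)\cdot\bu_k$, whose middle operator separates into a part free of $\bU$, built from $\fO_\fp\fL_\fq$ and $\tfL_\fq\fO_\fp$, and a part linear in $\bU$, of the schematic form $\fO_\fp\cdot\rT_\fq\bU\cdot\fO_\fq+\fO_\fq\cdot\rT_\fp\bU\cdot\fO_\fp$.

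I expect the decisive, and slightly delicate, observation to be that the $\bU$-dependent pieces produced by the two orders of shifting are termwise identical, since the combination $\fO_\fp\cdot\rT_\fq\bU\cdot\fO_\fq+\fO_\fq\cdot\rT_\fp\bU\cdot\fO_\fp$ is manifestly symmetric under $\fp\leftrightarrow\fq$; hence they cancel in the difference regardless of any condition on the operators. What remains is the $\bU$-free part, which equals $(\rT_\fp\rT_\fq\tbu_{l'})\cdot(\fO_\fp\fL_\fq-\tfL_\fq\fO_\fp)\cdot\bu_k$ on one side and $(\rT_\fp\rT_\fq\tbu_{l'})\cdot(\fO_\fq\fL_\fp-\tfL_\fp\fO_\fq)\cdot\bu_k$ on the other. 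These coincide precisely when $\fO_\fp\fL_\fq-\tfL_\fq\fO_\fp=\fO_\fq\fL_\fp-\tfL_\fp\fO_\fq$, i.e. when \eqref{Closure} holds, which yields path-independence and therefore the MDC property. The main obstacle is purely bookkeeping: tracking which shift lands on which factor, applying the row equation in the inverse-free direction, and confirming that the $\bU$-linear terms cancel unconditionally, so that \eqref{Closure} is exactly what is needed for the residual $\bU$-free terms to match.
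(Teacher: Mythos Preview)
Your proposal is correct and follows precisely the route the paper indicates: the paper does not give a formal proof, but in the paragraph preceding the proposition it instructs one to start from the path-independence identity $(\rT_\fq-1)(\rT_\fp-1)G_{k,l'}=(\rT_\fp-1)(\rT_\fq-1)G_{k,l'}$, use \eqref{GDyn} and \eqref{ukDyn} to eliminate the singly shifted wave functions while retaining the doubly shifted and unshifted ones, and read off \eqref{Closure}. Your computation carries this out explicitly, and your observation that the $\bU$-linear contribution $-\fO_\fp\cdot\rT_\fq\bU\cdot\fO_\fq-\fO_\fq\cdot\rT_\fp\bU\cdot\fO_\fp$ is symmetric in $\fp\leftrightarrow\fq$ (hence cancels unconditionally) is exactly the point that reduces the consistency to the purely operatorial identity \eqref{Closure}.
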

The statement holds because the closure relation \eqref{Closure} is equivalent to the path-independence of the kernel $G$, namely the DLT preserves the 
structure of the family of the linear equations \eqref{ukDyn} as well as the family of the nonlinear equations \eqref{MatEq}. In other words, a general 
solution solving the hierarchy of the discrete equations in $\bU$ can be obtained from the DLT if we start from a common seed solution, i.e. the zero 
solution $\bU^0=\mathbf{0}$, where an arbitrary number of degrees of freedom can be introduced (from the measure) during the DLT iteration \eqref{UDLT}, 
cf. Section \ref{S:Soliton} for soliton solutions. 

Another comment is that Equation \eqref{Closure} needs to be understood as an equation for the triplet $(\fL_\cdot,\tfL_\cdot,\fO_\cdot)$. Since it is 
not fully determined, it might be difficult to find its general solution. Nevertheless, one needs to keep the commutativity and noncommutativity properties 
of the operators in mind and some particular solutions can be expected from the closure relation. In fact, any solution of the closure relation \eqref{Closure} 
can result in a 3D integrable lattice equation and we shall see in the following sections that even some simple solutions can generate highly nontrivial 3D 
discrete integrable systems.

We can see from the dressing procedure that the kernel $G$ of the DLT is the key quantity in the approach. In fact, it possesses a very important 
property, namely, the kernel itself obeys a closed relation. We propose the statement in the following lemma: 
\begin{lemma}\label{L:G}
The kernel $G$ of the DLT obeys itself an integral equation
\begin{align}\label{GIntEq}
 G_{k,k'}=G_{k,k'}^0-\iint_D\rd\zeta(l,l')G_{l,k'}G_{k,l'}^0.
\end{align}
\end{lemma}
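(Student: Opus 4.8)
The plan is to show that the two sides of \eqref{GIntEq}, regarded as functions of the endpoint $X$ of the path in \eqref{G}, obey the same first-order difference equation in each lattice direction and agree at the base point $P$; since a lattice function is determined by its increments together with one value, they must then coincide. The base point is immediate: at $X=P$ every path sum in \eqref{G} is empty, so $G_{k,k'}$, $G_{k,k'}^0$, $G_{l,k'}$ and $G_{k,l'}^0$ all vanish, whence both sides of \eqref{GIntEq} equal $0$ there.

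For the increments I would apply $\rT_\fp-1$ to each side, with $\fp$ an arbitrary lattice direction. On the left, the second relation in \eqref{GDyn} with $l\to k$ gives $(\rT_\fp-1)G_{k,k'}=(\rT_\fp\tbu_{k'})\cdot\fO_\fp\cdot\bu_k$. Into this I substitute the transform \eqref{ukDLT}, first for $\bu_k$ and then, after shifting by $\rT_\fp$, for $\tbu_{k'}$. Because the kernels $G$ are scalars they may be pulled out of the contractions, and the bilinear pieces that appear are identified through \eqref{GDyn} as the increments $(\rT_\fp-1)G_{k,l'}^0=(\rT_\fp\tbu_{l'}^0)\cdot\fO_\fp\cdot\bu_k^0$ and $(\rT_\fp-1)G_{l,k'}=(\rT_\fp\tbu_{k'})\cdot\fO_\fp\cdot\bu_l$. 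This rewrites $(\rT_\fp-1)G_{k,k'}$ as $(\rT_\fp-1)G_{k,k'}^0$ minus single integrals in $(l,l')$ plus a term carrying a shifted kernel $\rT_\fp G_{l,k'}$.

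For the right-hand side I would differentiate the integrand of $\iint_D\rd\zeta(l,l')\,G_{l,k'}G_{k,l'}^0$ by the discrete Leibniz rule $(\rT_\fp-1)(AB)=(\rT_\fp A)(\rT_\fp-1)B+((\rT_\fp-1)A)B$ and again insert the increments from \eqref{GDyn}. The \emph{crux} is the matching: on the left, writing $\rT_\fp G_{l,k'}=G_{l,k'}+(\rT_\fp-1)G_{l,k'}$ produces precisely the quadratic increment term that the Leibniz rule supplies on the right, so that, after commuting the scalar factors, the two expressions for $\rT_\fp-1$ agree term by term. I expect this reorganisation of the quadratic pieces, together with keeping straight which contraction carries the old versus the new wave functions, to be the part that needs the most care.

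Finally, since $\rT_\fp-1$ annihilates the difference of the two sides for every lattice direction $\fp$ and this difference vanishes at $P$, it is invariant under all the lattice shifts and hence identically zero along any path issuing from $P$, which establishes \eqref{GIntEq}.
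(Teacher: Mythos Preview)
Your argument is correct, and it takes a genuinely different route from the paper's own proof. The paper expands $G_{k,k'}$ directly as the path sum \eqref{G}, substitutes the transforms \eqref{ukDLT} for $\bu_k$ and $\tbu_{k'}$ under the summation sign, and is then faced with a nested double sum over positions $X''$ along $\Gamma_{P\to X'}$ and $X'$ along $\Gamma_{P\to X}$. To collapse this it invokes a discrete summation-by-parts identity (their formula \eqref{DisSum}) that turns the nested sum into a product of two single path sums, which can then be recognised as $G_{k,l'}^0$ and $G_{l,k'}$.

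Your approach is the differential dual of this: instead of summing and using Abel summation, you difference and use the discrete Leibniz rule, then integrate back by matching the value at $P$. This sidesteps the somewhat heavy notation of \eqref{DisSum} and is arguably cleaner, since the computation reduces to a term-by-term comparison of scalars. (In fact the matching is even more direct than your ``crux'' paragraph suggests: with the Leibniz rule taken in the form $(\rT_\fp-1)(AB)=(\rT_\fp A)(\rT_\fp-1)B+((\rT_\fp-1)A)B$ for $A=G_{l,k'}$, $B=G_{k,l'}^0$, the two expressions for $(\rT_\fp-1)$ applied to the two sides already agree without the extra rewriting $\rT_\fp G_{l,k'}=G_{l,k'}+(\rT_\fp-1)G_{l,k'}$.) The paper's approach, by contrast, works directly with the defining formula and makes the role of the path more visible, which connects naturally with the closure discussion around Proposition~\ref{T:Closure}.
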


\begin{proof}
We can expand $G_{k,k'}$ following the definition \eqref{G} and replace $\bu_k$ and $\tbu_{k'}$ by $\bu_k^0$ and $\tbu_{k'}^0$ respectively according 
to the DLT \eqref{ukDLT}. Thus one obtains 
\begin{align*}
 G_{k,k'}(X)={}&G_{k,k'}^0(X)-\iint_D\rd\zeta(l,l')\sum_{\Gamma_{P\rightarrow X}}
 \Big(\sum_{\Gamma_{P\rightarrow X'}}\rT_\fq\tbu_{k'}(X'')\cdot\fO_\fq\cdot\bu_l(X'')\Big)\rT_\fp\tbu_{l'}^0(X')\cdot\fO_\fp\cdot\bu_k^0(X') \\
 &-\iint_D\rd\zeta(l,l')\sum_{\Gamma_{P\rightarrow X}}
 \Big(\sum_{\Gamma_{P\rightarrow X'}}\rT_\fq\tbu_{l'}^0(X'')\cdot\fO_\fq\cdot\bu_k^0(X')\Big)\rT_\fp\tbu_{k'}(X')\cdot\fO_\fp\cdot\bu_l(X'),
\end{align*}
where $X$ denotes the argument $\{n_1,n_2,n_3,\cdots\}$, $X'$, $X''$ denote summed over positions along the paths $\Gamma_{P\rightarrow X}$ and 
$\Gamma_{P\rightarrow X'}$ respectively, and $\rT_\cdot$ is the elementary discrete shift associated with the lattice parameters $\fp$ and $\fq$ 
governed by the paths. The identity that one needs for the proof is 
\begin{align}\label{DisSum}
 &\sum_{\Gamma_{P\rightarrow X}}\Big(\sum_{\Gamma_{P\rightarrow X'}}A(X''+e_{p''},X'')\Big)B(X'+e_{p'},X')
 +\sum_{\Gamma_{P\rightarrow X}}A(X'+e_{p'},X')\Big(\sum_{\Gamma_{P\rightarrow X'}}B(X''+e_{p'},X'')\Big) \nonumber \\
 &=\Big(\sum_{\Gamma_{P\rightarrow X}}A(X'+e_{p'},X')\Big)\Big(\sum_{\Gamma_{P\rightarrow X}}B(X''+e_{p''},X'')\Big),
\end{align}
where $e_{p'}$ and $e_{p''}$ are elementary discrete shifts and $A$, $B$ are functions defined on the infinite-dimensional lattice. We note that the 
identity is nothing but a discrete version of integration by part. Now the double summation in the above equation of $G$ can be reformulated as a product 
of two summations and the right hand side of the equation turns out to be 
\begin{align*}
 G_{k,k'}^0(X)-\iint_D\rd\zeta(l,l')
 \Big(\sum_{\Gamma_{P\rightarrow X}}\rT_\fq\tbu_{l'}^0(X'')\cdot\fO_\fq\cdot\bu_k^0(X'')\Big)
 \Big(\sum_{\Gamma_{P\rightarrow X}}\rT_\fp\tbu_{k'}(X')\cdot\fO_\fp\cdot\bu_l(X')\Big),
\end{align*}
which implies the integral equation of $G$ holds if one follows the definition \eqref{G}.
\end{proof}

With the help of \eqref{GIntEq}, one can actually establish a group-like property for the general DLT. We propose it in the following theorem: 
\begin{theorem}\label{T:Group}
The DLT for a 3D lattice equation, i.e. \eqref{ukDLT} and \eqref{UDLT}, satisfies a group-like property (see Figure \ref{F:Group}) in the sense that 
for two arbitrary DLTs 
\begin{align*}
 (\bU^0,\bu_k^0,\tbu_{k'}^0)\longrightarrow(\bU^1,\bu_k^1,\tbu_{k'}^1) \quad \hbox{and} \quad
 (\bU^1,\bu_k^1,\tbu_{k'}^1)\longrightarrow(\bU^2,\bu_k^2,\tbu_{k'}^2)
\end{align*}
with respect to the domains and the measures $(D_{10},\rd\zeta_{10}(l,l'))$ and $(D_{21},\rd\zeta_{21}(l,l'))$ respectively, the transform 
between $(\bU^0,\bu_k^0,\tbu_{k'}^0)$ and $(\bU^2,\bu_k^2,\tbu_{k'}^2)$ follows 
\begin{align*}
 &\bu_k^2=\bu_k^0-\iint_{D_{20}}\rd\zeta_{20}(l,l')G_{k,l'}^0\bu_l^2, \quad \tbu_{k'}^2=\tbu_{k'}^0-\iint_{D_{20}}\rd\zeta_{20}(l,l')G_{l,k'}^2\tbu_{l'}^0, \\
 &\bU^2-\bU^0=\iint_{D_{20}}\rd\zeta_{20}(l,l')\bu_l^2\tbu_{l'}^0, \quad \hbox{where} \quad
 \iint_{D_{20}}\,\cdot\,\rd\zeta_{20}(l,l')\doteq\iint_{D_{21}}\,\cdot\,\rd\zeta_{21}(l,l')+\iint_{D_{10}}\,\cdot\,\rd\zeta_{10}(l,l').
\end{align*}
\end{theorem}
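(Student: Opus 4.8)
The plan is to verify the three composite identities one at a time by substituting the two constituent transforms into each other and showing that every genuine double-integral cross-term cancels, leaving precisely a single integral against the additively combined measure $\rd\zeta_{20}$. For the nonlinear relation I would start from the telescoping identity $\bU^2-\bU^0=(\bU^2-\bU^1)+(\bU^1-\bU^0)$ and feed in the two copies of \eqref{UDLT}; for the two wave-function relations I would substitute the two copies of \eqref{ukDLT} one inside the other. In each case the substitution is immediate and the real work is the bookkeeping that makes the cross-terms vanish.

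Take the forward relation for $\bu_k^2$ first. Inserting $\bu_k^1=\bu_k^0-\iint_{D_{10}}\rd\zeta_{10}(l,l')G_{k,l'}^0\bu_l^1$ into $\bu_k^2=\bu_k^1-\iint_{D_{21}}\rd\zeta_{21}(l,l')G_{k,l'}^1\bu_l^2$ leaves $\bu_k^0$ together with a $D_{10}$-integral that still carries the intermediate wave function $\bu_l^1$ and a $D_{21}$-integral that still carries the intermediate kernel $G^1$. To reach the target form $\bu_k^0-\iint_{D_{20}}\rd\zeta_{20}(l,l')G_{k,l'}^0\bu_l^2$ I would (i) invert the transform $1\to2$ to replace $\bu_l^1$ by $\bu_l^2+\iint_{D_{21}}\rd\zeta_{21}(m,m')G_{l,m'}^1\bu_m^2$, and (ii) expand the intermediate kernel $G^1$ by the integral equation of Lemma \ref{L:G} for the transform $0\to1$. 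Each manipulation produces one iterated $D_{10}$--$D_{21}$ integral carrying a factor $G^1$; after a Fubini-type interchange of the two integrations and a relabelling of the two independent spectral-variable pairs, these iterated integrals have coinciding integrands (the kernels being scalar, they pass freely through $\bu$) and opposite signs, so they cancel and only the two single integrals survive, their sum being exactly $\iint_{D_{20}}\rd\zeta_{20}G_{k,l'}^0\bu_l^2$.

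The backward relation for $\tbu_{k'}^2$ is the transpose-dual computation: I would instead substitute $\tbu_{l'}^1$ from the transform $0\to1$ and expand the intermediate kernel by Lemma \ref{L:G} for the transform $1\to2$, so that the composite comes out naturally expressed through the final kernel $G^2$ rather than the seed kernel $G^0$, matching the asserted form. The nonlinear relation for $\bU$ is the simplest: in the telescoping sum I replace $\tbu_{l'}^1$ through $\tbu_{l'}^0$ and $\bu_l^1$ through $\bu_l^2$, and the two resulting cross-terms again cancel after relabelling; notably this step uses only \eqref{ukDLT}--\eqref{UDLT} and does not need Lemma \ref{L:G} at all.

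I expect the main obstacle to be organisational rather than conceptual: one must keep the two independent pairs of spectral integration variables (one belonging to $D_{10}$, one to $D_{21}$) rigorously distinct and then confirm that, after interchanging the order of integration, the surviving $G^1$-cross-terms genuinely coincide. The one substantive ingredient is the kernel integral equation of Lemma \ref{L:G}: it is precisely this relation that produces a second $G^1$-term matching the one generated by the wave-function substitution, so that the two cancel. With that cancellation in hand, each composite identity is seen to have the claimed DLT form against $(D_{20},\rd\zeta_{20})$.
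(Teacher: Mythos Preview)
Your proposal is correct and follows essentially the same approach as the paper's proof: eliminate the intermediate data $\bu_k^1$, $\tbu_{k'}^1$ by substituting one DLT into the other, and replace the intermediate kernel $G^1$ via the integral equation of Lemma~\ref{L:G}, after which the iterated cross-terms cancel upon relabelling and Fubini. Your extra remark that the nonlinear relation for $\bU$ does not require Lemma~\ref{L:G} is a valid refinement of the paper's terse argument.
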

\begin{proof}
One just needs to eliminate $\bu_k^1$ and $\tbu_{k'}^1$ in the dressings ``$0 \rightarrow 1$'' and ``$1\rightarrow 2$'' and simultaneously 
replaces $G^1$ by $G^0$ (or $G^2$), using \eqref{ukDLT} as well as the integral equation \eqref{GIntEq}.
\end{proof}
\begin{figure}[!h]
\centering
\begin{tikzpicture}[scale=0.8]
 \node at (1,0.3) {\scriptsize $D_{10},\rd\zeta_{10}(l,l')$};
 \node at (6,0.3) {\scriptsize $D_{21},\rd\zeta_{21}(l,l')$};
 \node at (-1.5,0) {\scriptsize $(\bu_k^0,\bu_{k'}^0,\bU^0)$};
 \node at (3.5,0) {\scriptsize $(\bu_k^1,\tbu_{k'}^1,\bU^1)$};
 \node at (8.5,0) {\scriptsize $(\bu_k^2,\tbu_{k'}^2,\bU^2)$};
 \node at (3.5,-1.7) {\scriptsize $D_{20},\rd\zeta_{20}(l,l')$};
 \draw [thick,->] (-0.4,0) -- (2.4,0);
 \draw [thick,->] (4.6,0) -- (7.4,0);
 \draw [thick,->] (-1.5,-0.5) to [out=-30,in=-150] (8.5,-0.5);
\end{tikzpicture}
\caption{Group-like property of the DLT}
\label{F:Group}
\end{figure}
Theorem \ref{T:Group} brings to light the underlying integrability of the structure. In fact, it implies that solutions from successive applications of 
the DLT are compatible. 

Furthermore, we note that sometimes the $\tau$-function is considered as a fundamental quantity which describes the algebraic structure behind an equation 
in the integrable systems theory, and it can be introduced within the framework of the DLT \cite{Nij85b}. However, in the infinite matrix structure of the DL, 
the $\tau$-function can be defined in a direct way. For this purpose, in the next subsection, we recover the DL from the DLT.

\subsection{Infinite matrix structure of the direct linearisation}

To recover the DL from the DLT,  the wave functions must be chosen to be the free waves. In fact, we need a seed solution to the nonlinear equation, i.e. 
the zero solution $\bU^0=\bf 0$. Then the free wave solution $\bu_k^0$, $\tbu_{k'}^0$ can be solved from the original linear problem \eqref{ukDyn} and they 
take the form of $\bu_k^0=\rho_k\bc_k$ and $\tbu_{k'}^0=\tbc_{k'}\sigma_{k'}$ respectively, where $\bc_k$ and $\tbc_{k'}$ are the monomial eigenvectors defined 
in \eqref{ck}, and the plane wave factors $\rho_k$ and $\sigma_{k'}$, as functions dependent on the lattice variables and parameters $n_\gamma$ and $p_\gamma$, 
and also the spectral parameters $k$ and $k'$ respectively, are defined by 
\begin{align}\label{PWF}
 (\rT_\fp\rho_k)\bc_k=\rho_k\fL_\fp\cdot\bc_k, \quad \tbc_{k'}\sigma_{k'}=\tbc_{k'}\cdot\tfL_\fp(\rT_\fp\sigma_{k'}), \quad 
 \fp\in\{p_\gamma|\gamma\in\mathbb{Z}^+\}.
\end{align}
Now we define the operator $\bOa$, an expression of $\bLd$, $\tbLd$ and $\bO$ which is independent of the lattice parameter $\fp$, by the following: 
\begin{align}\label{OaDyn}
 \bOa\cdot\fL_\fp-\tfL_\fp\cdot\bOa=\fO_\fp, \quad \fp\in\{p_\gamma|\gamma\in\mathbb{Z}^+\}.
\end{align}
The Cauchy kernel can immediately be recovered from the operator $\bOa$ as follow: 
\begin{align}\label{Kernel}
 \Oa_{k,l'}=\tbc_{l'}\cdot\bOa\cdot\bc_k.
\end{align}
From the definition of the Cauchy kernel \eqref{Kernel} it is obvious to see that the operator $\bOa$ is the infinite matrix form of the Cauchy kernel. 
However, we note that the infinite matrix form is much more convenient for us to deal with the computation in practice. With the free waves and the 
Cauchy kernel, we can now explain the kernel $G_{k,l'}^0$ of the DLT in this special case. In fact, making use of \eqref{OaDyn} and \eqref{Kernel} one 
can show that 
\begin{align}\label{G-tau}
 G_{k,l'}^0&=\sum_{\Gamma_{P\rightarrow X}}(\rT_\fp\sigma_{l'})\tbc_{l'}\cdot\fO_\fp\cdot\bc_k\rho_k
 =\sum_{\Gamma_{P\rightarrow X}}(\rT_\fp\sigma_{l'})\tbc_{l'}\cdot(\bOa\cdot\fL_\fp-\tfL_\fp\cdot\bOa)\cdot\bc_k\rho_k \nonumber \\
 &=\sum_{\Gamma_{P\rightarrow X}}((\rT_\fp\sigma_{l'})\Oa_{k,l'}(\rT_\fp\rho_k)-\sigma_{l'}\Oa_{k,l'}\rho_k)
 =\rho_k\Oa_{k,l'}\sigma_{l'}\Big|_P^X=\rho_k\Oa_{k,l'}\sigma_{l'},
\end{align}
in which the initial point $P$ is set to be $-\infty$ (for the sake of soliton-type solutions) in the summation along the path in order to get rid of the 
constant of summation. This tells us that $G_{k,l'}^0$ is a quantity containing both the Cauchy kernel and the plane wave factors. More precisely, the 
special case brings the DLT \eqref{ukDLT} back to the integral equation \eqref{IntEq}. With the help of the infinite matrix structure, it also makes it 
possible for us to rewrite the integral equation \eqref{IntEq} in a more compact form, namely, 
\begin{align}\label{uk}
 \bu_k=(1-\bU\cdot\bOa)\cdot\rho_k\bc_k,
\end{align}
where $\bU=\iint_D\rd\zeta(l,l')\bu_l\tbu_{l'}^0$ as it follows from \eqref{UDLT} when $\bU^0=0$. The linear equation \eqref{uk} also leads to  
a similar relation for the infinite matrix $\bU$: 
\begin{align}\label{U}
 \bU=(1-\bU\cdot\bOa)\cdot\bC,
\end{align}
where the infinite matrix $\bC$ is given by 
\begin{align}\label{C}
 \bC=\iint_D\rd\zeta(l,l')\bu_l^0\tbu_{l'}^0=\iint_D\rd\zeta(l,l')\rho_l\bc_l\tbc_{l'}\sigma_{l'}.
\end{align}
The quantity $\bC$ is an analogue of the free wave functions in the infinite matrix structure, which can be understood as a linear version of $\bU$. 
For arbitrary $\fp\in\{p_\gamma|\gamma\in\mathbb{Z}^+\}$, the infinite matrix $\bC$ satisfies the dynamical relation 
\begin{align}\label{CDyn}
 (\rT_\fp\bC)\cdot\tfL_\fp=\fL_\fp\cdot\bC.
\end{align}
Its nonlinear counterpart, i.e. the dynamical relation for the infinite matrix $\bU$, can naturally be obtained from the transform of $\bU$ \eqref{UDLT} 
in the particular case of $\bU^0=\mathbf{0}$, which now takes the form of 
\begin{align}\label{UDyn}
 (\rT_\fp\bU)\cdot\tfL_\fp=\fL_\fp\cdot\bU-(\rT_\fp\bU)\cdot\fO_\fp\cdot\bU, \quad \fp\in\{p_\gamma|\gamma\in\mathbb{Z}^+\}.
\end{align}
Alternatively, such relations can also be derived from \eqref{U} together with \eqref{OaDyn} and \eqref{CDyn}. 

Equation \eqref{UDyn} is the fundamental relation for constructing closed-form scalar 3D lattice equations within the DL framework and it is fully 
characterised by the operators $\fL_\fp$, $\tfL_\fp$ as well as $\fO_\fp$. The operators are determined by the closure relation \eqref{Closure} as we 
have pointed out in the previous subsection. In the next three sections we consider several particular solutions to the closure relation and show that 
they surprisingly lead to the well-known lattice AKP, BKP and CKP equations.

\section{The lattice AKP equation}\label{S:AKP}

As it has been shown that only three discrete independent variables are sufficient to generate a 3D lattice equation from the DLT approach and the obtained 
lattice equation possesses the MDC property, from now on we fix on the discrete independent variables $n\doteq n_1,m\doteq n_2,h\doteq n_3$ and the associated 
lattice parameters $p\doteq p_1,q\doteq p_2,r\doteq p_3$ without losing generality. Some short-hand notations can be introduced for simplicity as follows. 
Suppose $f\doteq f_{n,m,h}=f(n,m,h)$ is a function defined on the 3D lattice, we identify $\wt\cdot$, $\wh\cdot$ and $\wb\cdot$ with $\rT_p$, $\rT_q$ and 
$\rT_r$, i.e. 
\begin{align*}
 \wt f\doteq\rT_p f=f_{n+1,m,h}, \quad 
 \wh f\doteq\rT_q f=f_{n,m+1,h}, \quad 
 \wb f\doteq\rT_r f=f_{n,m,h+1},
\end{align*}
and similarly the undershifts denote the backward shifts, namely,
\begin{align*}
 \ut f\doteq\rT_{p}^{-1}f=f_{n-1,m,h}, \quad
 \uh f\doteq\rT_{q}^{-1}f=f_{n,m-1,h}, \quad 
 \ub f\doteq\rT_{r}^{-1}f=f_{n,m,h-1}.
\end{align*}

In order to consider a simple nontrivial solution to the closure relation \eqref{Closure}, we assume that $\fO_\fp$ is independent of $\fp$. 
The simplest choice is that $\fO_\fp=\bO$. At the same time, we require that $\fL_\fp$ and $\tfL_\fp$ linearly depend on the operators $\bLd$ 
and $\tbLd$ respectively. For nontrivial 3D lattice equations, a simple but nontrivial choice is 
\begin{align}\label{AKP:Operators}
 \fL_\fp=\fp+\bLd, \quad \tfL_\fp=\fp-\tbLd, \quad \fO_\fp=\bO, \quad \fp=p,q,r,
\end{align}
and as a result, the fundamental dynamical relations in \eqref{UDyn} in this case now turn out to be 
\begin{align}\label{AKP:UDyn}
 \wt\bU\cdot(p-\tbLd)=(p+\bLd)\cdot\bU-\wt\bU\cdot\bO\cdot\bU
\end{align}
together with the other two relations by the replacements $(\wt\cdot,p)\leftrightarrow(\wh\cdot,q)\leftrightarrow(\wb\cdot,r)$. The equation 
\eqref{AKP:UDyn} together with the analogues associated with the other two lattice directions are the basic relations that describe the evolutions 
of the infinite matrix $\bU$. Closed-from scalar lattice equations in certain entries of $\bU$ can be obtained from the relations. 
In order to find these lattice equations, we introduce 
\begin{align*}
 u=(\bU)_{0,0}, \quad v=1-(\bU\cdot\tbLd^{-1})_{0,0}, \quad w=1-(\bLd^{-1}\cdot\bU)_{0,0}, \quad 
 z=(\bLd^{-1}\cdot\bU\cdot\tbLd^{-1})_{0,0}-z_0, 
\end{align*}
where $z_0=\frac{n}{p}+\frac{m}{q}+\frac{h}{r}$. 
Since Equation \eqref{AKP:UDyn} is a relation for infinite matrices, we can consider $[\eqref{AKP:UDyn}]_{0,0}$, $[\eqref{AKP:UDyn}\cdot\bLd^{-1}]_{0,0}$, 
$[\bLd^{-1}\cdot\eqref{AKP:UDyn}]_{0,0}$ and $[\bLd^{-1}\cdot\eqref{AKP:UDyn}\cdot\tbLd^{-1}]_{0,0}$ respectively and obtain some dynamical relations 
involving the above quantities. By eliminating other redundant variables, the following nonlinear 3D lattice equations, namely the lattice KP, mKP and SKP 
equations, are derived: 
\begin{align}
 &(p-\wt u)(q-r+\bt u-\th u)+(q-\wh u)(r-p+\th u-\hb u)+(r-\wb u)(p-q-\hb u-\bt u)=0, \tag{lattice KP} \\
 &p\Big(\frac{\th v}{\wh v}-\frac{\bt v}{\wb v}\Big)+q\Big(\frac{\hb v}{\wb v}
 -\frac{\th v}{\wt v}\Big)+r\Big(\frac{\bt v}{\wt v}-\frac{\hb v}{\wh v}\Big)=0, \tag{lattice mKP} \\
 &p\Big(\frac{\wh w}{\th w}-\frac{\wb w}{\bt w}\Big)+q\Big(\frac{\wb w}{\hb w}
 -\frac{\wt w}{\th w}\Big)+r\Big(\frac{\wt w}{\bt w}-\frac{\wh w}{\hb w}\Big)=0, \tag{lattice mKP} \\
 &\frac{(\th z-\wh z)(\hb z-\wb z)(\bt z-\wt z)}{(\th z-\wt z)(\hb z-\wh z)(\bt z-\wb z)}=1. \tag{lattice SKP}
\end{align}
We note that the equations in $v$ and $w$ are both the lattice mKP equation and the continuum limit of either gives us the fully continuous 
mKP equation. If fact, we can transfer one to the other by introducing a simple transform $(n,m,h)\rightarrow(-n,-m,-h)$. In addition, 
if we consider $s=(\bLd^{-1}\cdot\bU-\bU\cdot\tbLd^{-1})_{0,0}$, another scalar closed-form equation can be found and it is given by 
\begin{align}\label{FWKP}
 &p\Bigg(\frac{\bt s-\wb s}{(2-\bt s)(2+\wb s)}-\frac{\th s-\wh s}{(2-\th s)(2+\wh s)}\Bigg) \nonumber \\
 &\qquad+q\Bigg(\frac{\th s-\wt s}{(2-\th s)(2+\wt s)}-\frac{\hb s-\wb s}{(2-\hb s)(2+\wb s)}\Bigg)
 +r\Bigg(\frac{\hb s-\wh s}{(2-\hb s)(2+\wh s)}-\frac{\bt s-\wt s}{(2-\bt s)(2+\wt s)}\Bigg)=0.
\end{align}
The lattice KP, mKP and SKP equations as well as Equation \eqref{FWKP} are connected with each other via Miura-type transforms. 
As a remark, we note that the equations in $v$, $w$, $z$ and $s$ can be obtained separately by choosing the same 
$\fL_\fp$ and $\tfL_\fp$ in \eqref{AKP:Operators} with slightly different $\fO_\fp$, namely, 
\begin{align}\label{AKP:ModOperator}
 \fO_\fp=\tbLd\cdot\bO, \quad \fO_\fp=\bO\cdot\bLd, \quad \fO_\fp=\tbLd\cdot\bO\cdot\bLd, \quad \fO_\fp=\tfrac{1}{2}(\bO\cdot\bLd-\tbLd\cdot\bO),
\end{align}
respectively. One can even consider more complex $\fO_\fp$ with the same $\fL_\fp$ and $\tfL_\fp$ and obtain other lattice equations, but they could only 
exist in multi-component form. In fact, all these lattice equations belong to the lattice AKP class because effectively they can all fit into \eqref{AKP:UDyn} 
and therefore we can conclude that \eqref{AKP:UDyn} together with its counterparts in the other lattice directions is the fundamental relation to understand 
the structure hidden behind the lattice AKP equation. 

There are some other important forms of the lattice AKP equation. For later convenience we introduce the following new variables: 
\begin{align*}
 V_a=1-\Big(\bU\cdot\frac{1}{a+\tbLd}\Big)_{0,0}, \quad W_a=1-\Big(\frac{1}{a+\bLd}\cdot\bU\Big)_{0,0}, \quad
 S_{a,b}=\Big(\frac{1}{a+\bLd}\cdot\bU\cdot\frac{1}{b+\tbLd}\Big)_{0,0}.
\end{align*}
These quantities contain the parameters $a$ and $b$ and therefore they are more general and can degenerate to the previous main variables via suitable 
limits in terms of $a$ and $b$. One of the most important quantities in the integrable systems theory is the $\tau$-function and it in the lattice AKP 
case can be defined as $\tau=\det(1+\bOa\cdot\bC)$. Here the determinant of the infinite matrix  can be worked out according to what we have explained in 
Section \ref{S:DLT}. In fact, following from \eqref{OaDyn} and \eqref{CDyn}, we have the dynamical relations for $\bOa$ and $\bC$ obeying 
\bse\label{AKP:SP}
\begin{align}
 &\bOa\cdot(p+\bLd)-(p-\tbLd)\cdot\bOa=\bO, \label{AKP:OaDyn} \\
 &\wt\bC\cdot(p-\tbLd)=(p+\bLd)\cdot\bC, \label{AKP:CDyn}
\end{align}
\ese
as well as the relations in terms of $(\wh\cdot,q)$ and $(\wb\cdot,r)$. These dynamical relations help us to find the evolutions of the $\tau$-function. 
Consider the dynamical evolution of the $\tau$-function, one can obtain 
\begin{align*}
 \wt\tau=\det[1+\bOa\cdot\bC+(p-\tbLd)^{-1}\cdot\bO\cdot\bC]=\tau\Big[1+\Big(\bU\cdot\frac{1}{p-\tbLd}\Big)_{0,0}\Big]
\end{align*}
and a similar relation for $\ut\tau$, where \eqref{WARank1} is used. Therefore we obtain the evolutions of the $\tau$-function as follows: 
\begin{align}\label{AKP:tauDyn}
 \frac{\wt\tau}{\tau}=V_{-p}, \quad \frac{\ut\tau}{\tau}=W_p.
\end{align}
where \eqref{AKP:SP} is used in the derivation. We note that similar relations also hold for $(\wh\cdot,q)$ and $(\wb\cdot,r)$. Now from \eqref{AKP:UDyn}, 
we can derive the dynamical relations for $W_a$ by considering $[\frac{1}{a+\bLd}\cdot\eqref{AKP:UDyn}]_{0,0}$, i.e. 
\begin{align}\label{AKP:WDyn}
 p(1-\wt W_a)-\Big(\frac{1}{a+\bLd}\cdot\wt\bU\cdot\tbLd\Big)_{0,0}=(p-a)(1-W_a)+\wt W_a u
\end{align}
as well as the similar relations for $(\wh\cdot,q)$ and $(\wb\cdot,r)$. Such relations give rise to the transform between $u$ and $\tau$ (if we set $a=p$): 
\begin{align}\label{AKP:u-tau}
 p-q+\wh u-\wt u=(p-q)\frac{\tau\th\tau}{\wh\tau\wt\tau}.
\end{align}
Similar relations for the other discrete shifts associated with the corresponding lattice parameters can be obtained in the same way. 
If we eliminate $u$ and replace $W_p$, $W_q$ and $W_r$ by the $\tau$-function, we end up with the discrete bilinear KP equation 
\begin{align}\label{dAKP}
 (p-q)\th\tau\wb\tau+(q-r)\hb\tau\wt\tau+(r-p)\bt\tau\wh\tau=0,
\end{align}
namely, the Hirota--Miwa equation, cf. \cite{Hir81} (see also \cite{Miw82}). Moreover, \eqref{AKP:UDyn} also gives rise to
\begin{align}\label{AKP:SDyn}
 1+(p-a)S_{a,b}-(p+b)\wt S_{a,b}=\wt W_aV_b, \quad (\wt\cdot,p)\leftrightarrow(\wh\cdot,q)\leftrightarrow(\wb\cdot,r),
\end{align}
if we consider $[\frac{1}{a+\bLd}\cdot\eqref{AKP:UDyn}\cdot\frac{1}{b+\tbLd}]_{0,0}$. Eliminating $W_a$ and $V_b$ in the relation then provides us with 
a more general 3D lattice equation which reads 
\begin{align}\label{AKP:NQC}
 \frac{[1+(p-a)\wh S_{a,b}-(p+b)\th S_{a,b}][1+(q-a)\wb S_{a,b}-(q+b)\hb S_{a,b}][1+(r-a)\wt S_{a,b}-(r+b)\bt S_{a,b}]}
 {[1+(p-a)\wb S_{a,b}-(p+b)\bt S_{a,b}][1+(q-a)\wt S_{a,b}-(q+b)\th S_{a,b}][1+(r-a)\wh S_{a,b}-(r+b)\hb S_{a,b}]}=1.
\end{align}
This is an extension of the lattice SKP equation which was given in \cite{NCWQ84}. The equation can be considered as the most general equation 
in the AKP class as suitable limits in terms of $a$ and $b$ of the equation lead to be the lattice KP, mKP and SKP equations as well as the 
Hirota--Miwa equation \eqref{dAKP}. In fact, the parameters $a$ and $b$ can be understood as the two extra lattice parameters associated 
with two auxiliary discrete variables, and both $a$ and $b$ as well as $p,q,r$ are on the same footing -- this is guaranteed by the MDC property. 
One can therefore find a connection between $S_{a,b}$ and the $\tau$-function, which is given by 
\begin{align}\label{AKP:S-tau}
 1-(a+b)S_{a,b}=\frac{\rT_a^{-1}\rT_{-b}\,\tau}{\tau}. 
\end{align}

Now we consider the linear problem of the lattice AKP equation. The linear equations \eqref{ukDyn} for the lattice AKP equation give us 
\begin{align}\label{AKP:ukDyn}
 \wt\bu_k=(p+\bLd)\cdot\bu_k-\wt\bU\cdot\bO\cdot\bu_k, \quad (\wt\cdot,p)\leftrightarrow(\wh\cdot,q)\leftrightarrow(\wb\cdot,r),
\end{align}
which leads to the following equation if we take the centre of \eqref{AKP:ukDyn} and set $\phi=(\bu_k)_0$: 
\begin{align}\label{AKP:Lax}
 \wt\phi-\wh\phi=(p-q+\wh u-\wt u)\phi=(p-q)\frac{\tau\th\tau}{\wh\tau\wt\tau}\phi,
\end{align}
and also the other two similar equations if we interchange $(\wt\cdot,p)\leftrightarrow(\wh\cdot,q)\leftrightarrow(\wb\cdot,r)$.
This is the Lax triplet of the Hirota--Miwa equation. In other words, the compatibility condition of \eqref{AKP:Lax} gives us \eqref{dAKP}.
One comment here is that if we eliminate the $\tau$-function in the Lax triplet, we can obtain a nonlinear form of the lattice AKP equation 
without the lattice parameters, namely, 
\begin{align}\label{AKP:phiEq}
 \frac{\th\phi-\hb\phi}{\wh\phi}+\frac{\hb\phi-\bt\phi}{\wb\phi}+\frac{\bt\phi-\th\phi}{\wt\phi}=0,
\end{align}
which is nothing but the lattice mKP equation of $v$ after a point transform. 

In the lattice KP, mKP equations $u$ and $v$ are both potential variables, we would like to note that there also exists a non-potential form of the 
lattice KP equation. The lattice KP equation can be written in an alternative form which reads 
\begin{align}\label{lpKP}
 \frac{(q-r+\wb u-\wh u)^{\wt{\ }}}{q-r+\wb u-\wh u}=\frac{(r-p+\wt u-\wb u)^{\wh{\ }}}{r-p+\wt u-\wb u}=\frac{(p-q+\wh u-\wt u)^{\wb{\ }}}{p-q+\wh u-\wt u}.
\end{align}
Owing to the above form of the lattice AKP equation, it is very natural for us to introduce the non-potential variables 
\begin{align}\label{AKP:NPVar}
 P=q-r+\wb u-\wh u, \quad Q=r-p+\wt u-\wb u, \quad R=p-q+\wh u-\wt u,
\end{align}
and consequently a coupled system of $P$, $Q$ and $R$ can be derived from \eqref{lpKP}, i.e. 
\begin{align}\label{AKP:NPSys}
 \frac{\wt P}{P}=\frac{\wh Q}{Q}=\frac{\wb R}{R}, \quad P+Q+R=0, \quad \wt P+\wh Q+\wb R=0,
\end{align}
where the first equation is Equation \eqref{lpKP} itself and the other two follow from the definitions of $P$, $Q$ and $R$ directly. The system 
\eqref{AKP:NPSys} can be treated as the lattice non-potential KP equation and it was proposed by Nimmo in \cite{Nim06}. Now for a scalar form, 
we first eliminate $R$ and obtain $\wt P/P=\wh Q/Q, \wb P+\wb Q=\wt P+\wh Q$, and then one can have the expression of $P$ in terms of $Q$ and 
consequently the expression of $R$ in terms of $Q$: 
\begin{align*}
 P=\frac{Q\wt Q(\wb Q(\th Q-\bt Q)+\hb Q(\wb Q-\wh Q))}{\wh Q(\wt Q\hb Q-\wb Q\th Q)}, \quad 
 R=\frac{Q\wb Q(\wt Q(\bt Q-\hb Q)+\th Q(\wh Q-\wb Q))}{\wh Q(\wt Q\hb Q-\wb Q\th Q)}.
\end{align*}
Now if we express everything in $Q$, the system turns out to be a 10-point scalar lattice equation (see Figure \ref{F:NPKP}) in the form of 
\begin{align}\label{AKP:NPEq}
 &\wt Q\wt{\wt Q}\wh{\wt Q}\thb Q\hb Q - \wt{\wt Q}\th Q\thb Q\wh Q\hb Q + \wt{\wt Q}\th Q\thb Q\hb Q\wb Q
 -\wt Q\wt{\wt Q}\th{\wt Q}\hb Q\bt Q - \wt Q\wt{\wt Q}\thb Q\hb Q\bt Q + \th{\wt Q}\th Q\wh Q\hb Q\bt Q\nonumber \\
 &+\wt{\wt Q}\th{\wt Q}\th Q\wb Q\bt Q - \th{\wt Q}{\th Q}^2\wb Q\bt Q - \th{\wt Q}\th Q\hb Q\wb Q\bt Q
 +\th{\wt Q}\th Q\wb Q{\bt Q}^2 + \wt Q\wt{\wt Q}\hb Q\bt Q\wt{\bt Q} - \wt{\wt Q}\th Q\wb Q\bt Q\wt{\bt Q}=0.
\end{align}
It is this quintic equation which we refer to as the non-potential lattice KP equation and this scalar form, to the best of the authors' knowledge, 
has not been given elsewhere. The Lax triplet of the equation can be obtained from \eqref{AKP:Lax} directly if we consider \eqref{AKP:NPVar} and the 
expressions of $P$ and $R$ in terms of $Q$ given above. We note that a different nonlinear form of the non-potential discrete AKP equation was given 
in \cite{GRPSW07} in which the equation is quartic but has 14 terms (compared to our quintic equation having 12 terms). 
\begin{figure}
\centering
\begin{tikzpicture}[scale=0.15]
 \node at (-0.5,-3) {$n$};
 \node at (17.5,4.5) {$m$};
 \node at (5,17.5) {$h$};
 \draw[very thick,->] (3,3) -- (-3,-3);
 \draw[very thick,->] (3,3) -- (18,3);
 \draw[very thick,->] (3,3) -- (3,18);
 \coordinate (A1) at (0, 0);
 \coordinate (A2) at (10, 0);
 \coordinate (A3) at (13, 3);
 \coordinate (A4) at (3, 3);
 \coordinate (B1) at (0, 10);
 \coordinate (B2) at (10, 10);
 \coordinate (B3) at (13, 13);
 \coordinate (B4) at (3, 13);
 \draw[very thick] (A1) -- (A2);
 \draw[very thick] (A2) -- (A3);
 \draw[dashed,very thick] (A3) -- (A4);
 \draw[dashed,very thick] (A4) -- (A1);
 \draw[very thick] (B1) -- (B2);
 \draw[very thick] (B2) -- (B3);
 \draw[very thick] (B3) -- (B4);
 \draw[very thick] (B4) -- (B1);
 \draw[very thick] (A1) -- (B1);
 \draw[very thick] (A2) -- (B2);
 \draw[very thick] (A3) -- (B3);
 \draw[dashed,very thick] (A4) -- (B4);
 \foreach \x in {0,5,...,10}
 \foreach \y in {0,5,...,10}
  {
  \draw[densely dashed] (0+0.3*\y,\x+0.3*\y) -- (10+0.3*\y,\x+0.3*\y);
  \draw[densely dashed] (0+\x,0+\y) -- (3+\x,3+\y);
  \draw[densely dashed] (0+\x+0.3*\y,0+0.3*\y) -- (0+0.3*\y+\x,10+0.3*\y);
  }
 \fill [red] (0,0) circle (10pt);
 \fill [red] (1.5,1.5) circle (10pt);
 \fill [red] (6.5,6.5) circle (10pt);
 \fill [red] (8,8) circle (10pt);
 \fill [red] (1.5,6.5) circle (10pt);
 \fill [red] (6.5,1.5) circle (10pt);
 \fill [red] (0,5) circle (10pt);
 \fill [red] (3,8) circle (10pt);
 \fill [red] (8,3) circle (10pt);
 \fill [red] (5,0) circle (10pt);
 \draw[thick,red] (0,0) -- (1.5,1.5) -- (8,3) -- (6.5,1.5) -- (5,0) -- (0,0);
 \draw[thick,red] (0,0) -- (0,5) -- (6.5,6.5) -- (5,0) -- (0,0);
 \draw[thick,red] (0,0) -- (1.5,1.5) -- (3,8) -- (0,5) -- (0,0);
 \draw[thick,red] (0,5) -- (3,8) -- (8,8) -- (6.5,6.5) -- (0,5);
 \draw[thick,red] (5,0) -- (8,3) -- (8,8) -- (6.5,6.5) -- (5,0);
\end{tikzpicture}
\caption{The non-potential lattice AKP equation}
\label{F:NPKP}
\end{figure}

Before we move onto the lattice BKP and CKP equations, we would like to note that in the AKP class there exist several nonlinear integrable equations 
which are connected with each other through Miura-type transforms. In fact they are the same equation in different forms because the core part of the 
scheme, namely, the form of the operators $\fL_\fp$, $\tfL_\fp$ and $\fO_\fp$, has been fixed at very beginning. In fact, all these equations share the 
same bilinear structure, namely, the Hirota--Miwa equation. In order to understand and compare equations in different classes, we need a uniformising 
quantity to characterise the algebraic structure behind the 3D lattice equations. The $\tau$-function could be a very strong candidate as it only contains 
the key information of an integrable system, namely, the Cauchy kernel and the plane wave factors. Therefore in the future we only mean the equation in the 
$\tau$-function, i.e. the Hirota--Miwa equation \eqref{dAKP}, by the lattice AKP equation. While the other forms of the lattice AKP equation, namely, the 
equations in $u$, $v$, $w$, $S_{a,b}$ as well as Equation \eqref{FWKP} can all be expressed by the $\tau$-function. This is done by taking suitable limits 
on the parameters $a$ and $b$ in \eqref{AKP:S-tau}. In the next sections, we will mainly be concentrating on the $\tau$-function to understand the structure 
of the lattice BKP and CKP equations. Furthermore, we also note that the $\tau$-equation does not necessarily need to be bilinear although the terminology 
was from the bilinear theory.

\section{The lattice BKP equation}\label{S:BKP}
Now we try to find slightly complex solution to the closure relation \eqref{Closure} in order to obtain different classes of 3D integrable lattice equations. 
As we assumed that $\fO_\fp=\bO$ (the simplest case when $\fO_\fp$ is independent of $\fp$) and $\fL_\fp$ and $\tfL_\fp$  must linearly depend on $\bLd$ and 
$\tbLd$ in the lattice AKP class, we now suppose that they depend on $\bLd$ and $\tbLd$ fractionally linearly. One of solutions under the assumption can be 
 \begin{align}\label{BKP:Operators}
 \fL_\fp=\frac{\fp+\bLd}{\fp-\bLd}, \quad \tfL_\fp=\frac{\fp-\tbLd}{\fp+\tbLd}, \quad 
 \fO_\fp=\fp\frac{1}{\fp+\tbLd}\cdot(\bO\cdot\bLd-\tbLd\cdot\bO)\cdot\frac{1}{\fp-\bLd}, \quad \fp=p,q,r.
\end{align}
In order to understand the 3D lattice equation in this class from the point view of the $\tau$-function, we need the relations for $\bOa$ and $\bC$. 
According to \eqref{OaDyn} and \eqref{CDyn}, in this case the dynamical evolution for them must obey 
\bse\label{BKP:SP}
\begin{align}
 &\bOa\cdot\frac{p+\bLd}{p-\bLd}-\frac{p-\tbLd}{p+\tbLd}\cdot\bOa=p\frac{1}{p+\tbLd}\cdot(\bO\cdot\bLd-\tbLd\cdot\bO)\cdot\frac{1}{p-\bLd}, \label{BKP:OaDyn} \\
 &\wt\bC\cdot\frac{p-\tbLd}{p+\tbLd}=\frac{p+\bLd}{p-\bLd}\cdot\bC, \label{BKP:CDyn}
\end{align}
\ese
as well as the similar $(\wh\cdot,q)$- and $(\wb\cdot,r)$-relations. Equation \eqref{BKP:Operators} also tells us that the nonlinear evolutions of 
the infinite matrix $\bU$ in this class satisfy 
\begin{align}\label{BKP:UDyn}
 \wt\bU\cdot\frac{p-\tbLd}{p+\tbLd}=\frac{p+\bLd}{p-\bLd}\cdot\bU
 -p\,\wt\bU\cdot\frac{1}{p+\tbLd}\cdot(\bO\cdot\bLd-\tbLd\cdot\bO)\cdot\frac{1}{p-\bLd}\cdot\bU,
\end{align}
as well as similar relations for $(\wh\cdot,q)$ and $(\wb\cdot,r)$. These relations can be obtained from \eqref{UDyn}. Now one can try to select 
certain entries in the matrix $\bU$ following the idea in the lattice AKP equation and expect to derive some closed-form equations from the framework. 
However, due to $\fO_\fp$ now depending on the lattice parameters $p,q,r$, it is not possible to derive closed-form scalar lattice equations from 
the relations without any extra conditions and a multi-component form must be involved. In order to get a scalar equation, we impose an extra condition 
on the relations and it is actually done through imposing a constraint on the measure in the DLT which is given by $\rd\zeta(l,l')=-\rd \zeta(l',l)$. 
The reason why we impose the extra condition is that the $\bOa$ in this case is antisymmetric if one follows \eqref{OaDyn} and \eqref{Kernel} and we 
want the measure to match this property. Under this constraint, one can figure out from \eqref{C} and \eqref{U} that the infinite matrices $\bC$ and 
$\bU$ satisfy the antisymmetry property $\tbC=-\bC, \tbU=-\bU$. Consequently we have $(\bLd^i\cdot\bU\cdot\tbLd^i)_{0,0}=0$ for arbitrary $i\in\mathbb{Z}$. 
For simplicity in the future, we here introduce some new variables as follows: 
\begin{align*}
 &V_a=1-\Big(\bU\cdot\frac{a}{a-\tbLd}\Big)_{0,0}=1+\Big(\frac{a}{a-\bLd}\cdot\bU\Big)_{0,0}=W_a, \\
 &S_{a,b}=\Big(\frac{a}{a-\bLd}\cdot\bU\cdot\frac{b}{b-\tbLd}\Big)_{0,0}=-S_{b,a},
\end{align*}
where $V_a=W_a$ and $S_{a,b}=-S_{b,a}$ due to the antisymmetry property of $\bU$. Focusing on these new variables and consider the 
centre of the dynamical evolutions including $[\eqref{BKP:UDyn}]_{0,0}$, $[\frac{a}{a-\bLd}\cdot\eqref{BKP:UDyn}]_{0,0}$ and also 
$[\frac{a}{a-\bLd}\cdot\eqref{BKP:UDyn}\cdot\frac{b}{b-\tbLd}]_{0,0}$, we derive the following relations as a consequence: 
\bse 
\begin{align}
 &V_p\wt V_{-p}=1, \label{BKP:uDyn} \\
 &1+2V_p\wt S_{-a,-p}=\frac{a-p}{a+p}(V_p-V_{-a})+V_p\wt V_{-a} \label{BKP:VDyn}
\end{align}
as well as the dynamical relation for the variable $S_{a,b}$: 
\begin{align}
 &\frac{b+p}{b-p}\wt S_{-a,-b}+\frac{p-a}{p+a}S_{-a,-b}-\frac{2b}{b-p}\wt S_{-a,-p}+\frac{2a}{p+a}S_{p,-b} \nonumber \\
 &\qquad+(1-V_{-b})\wt S_{-a,-p}-(1-\wt V_{-a}+2\wt S_{-a,-b})S_{p,-b}=0, \label{BKP:SDyn}
\end{align}
\ese
together with the dynamical relations along the other two directions, i.e. $(\wh\cdot,q)$ and $(\wb\cdot,r)$. These relations will play the core ingredients 
for constructing a closed-form lattice equation later. We now introduce the $\tau$-function and it is defined by $\tau^2=\det(1+\bOa\cdot\bC)$. 
Consider the dynamical evolution of the $\tau$-function, we have 
\begin{align*}
 \wt\tau^2={}&\det(1+\bOa\cdot\wt\bC)
 =\det\Big(1+\bOa\cdot\bC+p\,\frac{1}{p-\tbLd}\cdot(\bO\cdot\bLd-\tbLd\cdot\bO)
 \cdot\frac{1}{p-\bLd}\cdot\bC\Big) \nonumber \\
 ={}&\tau^2\det
 \left(
 \begin{array}{cc}
 1+p(\frac{\bLd}{p-\bLd}\cdot\bU\cdot\frac{1}{p-\tbLd})_{0,0} & -p(\frac{\bLd}{p-\bLd}\cdot\bU\cdot\frac{\tbLd}{p-\tbLd})_{0,0}\\
 p(\frac{1}{p-\bLd}\cdot\bU\cdot\frac{1}{p-\tbLd})_{0,0} & 1-p(\frac{1}{p-\bLd}\cdot\bU\cdot\frac{\tbLd}{p-\tbLd})_{0,0}
 \end{array}
 \right),
\end{align*}
in which we have used the rank 2 Weinstein--Aronszajn formula. The similar relations also hold for $(\wh\cdot,q)$ and $(\wb\cdot,r)$ 
as well as the undershifts $(\ut\cdot,-p)$, $(\uh\cdot,-q)$ and $(\ub\cdot,-r)$. Taking \eqref{BKP:uDyn} into consideration, without 
losing generality, we have the dynamical relations for the $\tau$-function 
\begin{align}\label{BKP:tauDyn}
 V_p=\frac{\wt\tau}{\tau}, \quad V_{-p}=\frac{\ut\tau}{\tau}, \quad
 V_q=\frac{\wh\tau}{\tau}, \quad V_{-q}=\frac{\uh\tau}{\tau}, \quad
 V_r=\frac{\wb\tau}{\tau}, \quad V_{-r}=\frac{\ub\tau}{\tau}.
\end{align}
Now if we set $a=q$ in \eqref{BKP:VDyn}, the $V$-variables can be replaced by the $\tau$-function using \eqref{BKP:tauDyn}, and therefore we can express 
$S_{-q,-p}$ by $\tau$ in the following formula:
\begin{align}\label{BKP:S-tau}
 S_{-q,-p}=\frac{1}{2}\Big[\frac{\uh\tau-\ut\tau}{\tau}+\frac{q-p}{q+p}\Big(1-\frac{\uth\tau}{\tau}\Big)\Big].
\end{align}
Similar relations in terms of the other lattice parameters and discrete shifts can be obtained in a similar way. Finally, setting $a=q$ and $b=r$ in 
\eqref{BKP:SDyn} and replacing everything by the $\tau$-function, we end up with a closed-form 3D lattice equation in the $\tau$-function, 
which takes the form of 
\begin{align}\label{dBKP}
 &(p-q)(q-r)(r-p)\tau\thb\tau+(p+q)(p+r)(q-r)\wt\tau\hb\tau \nonumber \\
 &\qquad+(r+p)(r+q)(p-q)\wb\tau\th\tau+(q+r)(q+p)(r-p)\wh\tau\bt\tau=0.
\end{align}
This is the lattice BKP equation which is also known as the Miwa equation \cite{Miw82}. The difference between the Miwa equation and the Hirota--Miwa 
equation is that the former has an extra term (the fourth term). In fact if one applies some formal limits, the Hirota--Miwa equation can be recovered 
from the Miwa equation. 

Next, we consider the linear problem of the lattice BKP equation. The linear equations \eqref{ukDyn} in the lattice BKP case are given by 
\begin{align}\label{BKP:ukDyn}
 \wt\bu_k=\frac{p+\bLd}{p-\bLd}\cdot\bu_k-p\,\wt\bU\cdot\frac{1}{p+\tbLd}\cdot(\bO\cdot\bLd-\tbLd\cdot\bO)\cdot\frac{1}{p-\bLd}\cdot\bu_k
\end{align}
together with its $(\wh\cdot,q)$ and $(\wb\cdot,r)$ counterparts, as $\fL_\fp$, $\tfL_\fp$ and $\fO_\fp$ follow from \eqref{BKP:Operators}.
By introducing $\phi=(\bu_k)_0, \psi_a=(\frac{a}{a-\bLd}\cdot\bu_k)_0$, we therefore from \eqref{BKP:ukDyn} have 
\begin{align}
 \wt\phi=\wt V_{-p}(\psi_p-\phi), \quad
 \wt\psi_{-a}=\frac{p-a}{p+a}(\psi_{-a}-\psi_p)+(\wt V_{-a}-2\wt S_{-a,-p})\psi_p+\wt S_{-a,-p}\phi \label{BKP:psiDyn}
\end{align}
together with the similar equations associated with $(\wh\cdot,q)$ and $(\wb\cdot,r)$. And now we eliminate $\psi$ by setting $a=p$ and as a result
we can derive the Lax triplet of the lattice BKP equation 
\begin{align}\label{BKP:Lax}
 \th\phi-\phi=\Big(\frac{p+q}{p-q}\Big)\frac{\wh\tau\wt\tau}{\tau\th\tau}(\wh\phi-\wt\phi), \quad
 (\wt\cdot,p)\leftrightarrow(\wh\cdot,q)\leftrightarrow(\wb\cdot,r).
\end{align}
Similarly, a nonlinear form of the BKP equation can be obtained if we eliminate the $\tau$-function in the Lax triplet and it takes the form of 
\begin{align}\label{BKP:phiEq}
 \frac{(\hb\phi-\bt\phi)(\th\phi-\phi)}{(\bt\phi-\th\phi)(\phi-\hb\phi)}=\frac{(\wt\phi-\wh\phi)(\wb\phi-\thb\phi)}{(\wh\phi-\wb\phi)(\thb\phi-\wt\phi)},
\end{align}
where two cross-ratios are involved in the explicit form.

\section{The lattice CKP equation}\label{S:CKP}

Under the assumption in Section \ref{S:BKP}, namely, the operators $\fL_\fp$, $\tfL_\fp$ and $\fO_\fp$ depend on $\bLd$ and $\tbLd$ fractionally linearly, 
there is also another solution to the closure relation \eqref{Closure}: 
\begin{align}\label{CKP:Operators}
 \fL_\fp=\frac{\fp+\bLd}{\fp-\bLd}, \quad \tfL_\fp=\frac{\fp-\tbLd}{\fp+\tbLd}, \quad 
 \fO_\fp=2\fp\frac{1}{\fp+\tbLd}\cdot\bO\cdot\frac{1}{\fp-\bLd}, \quad \fp=p,q,r.
\end{align}
Again, once we have the explicit expressions for the key information $\fL_\fp$, $\tfL_\fp$ and $\fO_\fp$, the general formula \eqref{UDyn} for the 
infinite matrix $\bU$ turns out to be 
\begin{align}\label{CKP:UDyn}
 \wt\bU\cdot\frac{p-\tbLd}{p+\tbLd}=\frac{p+\bLd}{p-\bLd}\cdot\bU
 -2p\,\wt\bU\cdot\frac{1}{p+\tbLd}\cdot\bO\cdot\frac{1}{p-\bLd}\cdot\bU
\end{align}
together with the analogues after the replacements $(\wt\cdot,p)\rightarrow(\wh\cdot,q)$ as well as $(\wt\cdot,p)\rightarrow(\wb\cdot,r)$. 
We note that it is very similar to the lattice BKP equation that a closed-form scalar 3D lattice equation might not exist from \eqref{CKP:UDyn} 
without extra conditions due to the complexity of $\fO_\fp$. Now we consider the general dynamical relations for $\bOa$ and $\bC$ in this case. 
According the forms of $\fL_\fp$, $\tfL_\fp$ and $\fO_\fp$, we have from \eqref{OaDyn} and \eqref{CDyn} 
\bse\label{CKP:SP}
\begin{align}
 &\bOa\cdot\frac{p+\bLd}{p-\bLd}-\frac{p-\tbLd}{p+\tbLd}\cdot\bOa=2p\frac{1}{p+\tbLd}\cdot\bO\cdot\frac{1}{p-\bLd}, \label{CKP:OaDyn} \\
 &\wt\bC\cdot\frac{p-\tbLd}{p+\tbLd}=\frac{p+\bLd}{p-\bLd}\cdot\bC \label{CKP:CDyn}
\end{align}
\ese
as well as the relations by replacing $(\wt\cdot,p)$ by $(\wh\cdot,q)$ and $(\wb\cdot,r)$. The important thing is that the $\bOa$ in \eqref{CKP:OaDyn} 
turns out to be a symmetric kernel if one takes \eqref{Kernel} into consideration. Motivated by the antisymmetry condition on the measure in the lattice 
BKP, we impose a symmetry condition on the measure in the DLT in this case, i.e. $\rd\zeta(l,l')=\rd\zeta(l',l)$. In other words, we let the measure match 
the property of the Cauchy kernel. As a result, we have the symmetry property for the infinite matrices $\bC$ and $\bU$ if we follow the definitions \eqref{C} 
and \eqref{U}, namely, $\tbC=\bC, \tbU=\bU$. Now we introduce some new variables in which some new parameters are involved: 
\begin{align*}
 V_a=1-\Big(\bU\cdot\frac{1}{a+\tbLd}\Big)_{0,0}=1-\Big(\frac{1}{a+\bLd}\cdot\bU\Big)_{0,0}=W_a, \quad 
 S_{a,b}=\Big(\frac{1}{a+\bLd}\cdot\bU\cdot\frac{1}{b+\tbLd}\Big)_{0,0}=S_{b,a},
\end{align*}
where $V_a=W_a$ and $S_{a,b}=S_{b,a}$ hold because of the symmetry property of $\bU$. Now the dynamical evolution of $\bU$ \eqref{CKP:UDyn} can 
give rise to the evolutions for the introduced new variables as follows if we consider $[\frac{1}{a+\bLd}\cdot\eqref{CKP:UDyn}]_{0,0}$ and 
$[\frac{1}{a+\bLd}\cdot\eqref{CKP:UDyn}\cdot\frac{1}{b+\tbLd}]_{0,0}$: 
\bse
\begin{align}
 &\wt V_a+\frac{p-a}{p+a}V_a=\Big(\frac{2p}{p+a}-\wt S_{a,p}\Big)V_{-p}, \label{CKP:VDyn} \\
 &[1-(p+a)\wt S_{a,p}][1+(p-b)S_{-p,b}]+\frac{(p+a)(p+b)}{2p}\wt S_{a,b}-\frac{(p-a)(p-b)}{2p}S_{a,b}=1 \label{CKP:SDyn}
\end{align}
\ese
as well as the analogues in terms of $(\wh\cdot,q)$ and $(\wb\cdot,r)$. The $\tau$-function in this class is defined by $\tau=\det(1+\bOa\cdot\bC)$. 
Since we know the dynamical evolutions of $\bOa$ and $\bC$ from \eqref{CKP:SP}, with the help of them the dynamical evolution of $\tau$-function gives us 
\begin{align*}
 \wt\tau&=\det\Big(1+\bOa\cdot\bC+2p\frac{1}{p-\tbLd}\cdot\bO\cdot\frac{1}{p-\bLd}\cdot\bC\Big)
 =\tau\Big[1+2p\Big(\frac{1}{p-\bLd}\cdot\bU\cdot\frac{1}{p-\tbLd}\Big)_{0,0}\Big],
\end{align*}
where the computation is very similar to the case in the lattice AKP equation. Similarly, we can also derive the evolution of $\tau$-function in 
terms of the undershifts. Therefore the expressions for the $\tau$-function are given by 
\begin{align}\label{CKP:tauDyn}
 \frac{\wt\tau}{\tau}=1+2pS_{-p,-p}, \quad \frac{\ut\tau}{\tau}=1-2pS_{p,p}
\end{align}
together with the counterparts of the shifts $\wh\cdot$ and $\wb\cdot$ associated with their corresponding lattice parameters. Furthermore, 
if we set $a=p$ in \eqref{CKP:VDyn} and make use of the above relations, the dynamical relations of $V$-variables in terms of $\tau$ can easily 
be obtained, which are 
\begin{align}\label{CKP:V-tau}
 \frac{\wt V_p}{V_{-p}}=\frac{\tau}{\wt\tau}, \quad \frac{\ut V_{-p}}{V_p}=\frac{\tau}{\ut\tau},
\end{align}
and the similar relations also hold via the replacements $(\wt\cdot,p)\leftrightarrow(\wh\cdot,q)\leftrightarrow(\wb\cdot,r)$. 
In order to find a closed-form equation in the $\tau$-function, we first set $a=p$ and $b=q$ in \eqref{CKP:SDyn} respectively and this gives us 
\begin{align*}
 \frac{1-(p+b)\wt S_{p,b}}{1+(p-b)S_{-p,b}}=\frac{\tau}{\wt\tau}, \quad \frac{1+(p-a)S_{a,-p}}{1-(p+a)\wt S_{a,p}}=\frac{\wt\tau}{\tau}, \quad 
 (\wt\cdot,p)\leftrightarrow(\wh\cdot,q)\leftrightarrow(\wb\cdot,r).
\end{align*}
Now if we set $a=b=q$ in \eqref{CKP:SDyn} and make use of the above obtained relations, an expression of the $S$-variable in terms of 
the $\tau$-function can be obtained as follow: 
\begin{align}\label{CKP:S-tau}
 [1-(p+q)S_{p,q}]^2=\frac{(p+q)^2\ut\tau\uh\tau-(p-q)^2\tau\uth\tau}{4pq\tau^2}.
\end{align}
The expressions for $S_{q,r}$ and $S_{r,p}$ by the $\tau$-function as well as the $S$-quantity associated with $-p,-q,-r$ can be derived in a 
similar way. Equation \eqref{CKP:S-tau} is the key relation in deriving a closed-form 3D lattice equation. In fact, if we set $a=q$ and $b=p$ in 
\eqref{CKP:SDyn} and express all the variables in the equation by the $\tau$-function, it turns out to be the lattice CKP equation 
\begin{align}\label{dCKP}
 &[(p-q)^2(q-r)^2(r-p)^2\tau\thb\tau+(p+q)^2(p+r)^2(q-r)^2\wt\tau\hb\tau \nonumber \\
 &\quad-(q+r)^2(q+p)^2(r-p)^2\wh\tau\bt\tau-(r+p)^2(r+q)^2(p-q)^2\wb\tau\th\tau]^2 \nonumber \\
 &\qquad-4(p^2-q^2)^2(r^2-p^2)^2[(q+r)^2\th\tau\bt\tau-(q-r)^2\wt\tau\thb\tau][(q+r)^2\wh\tau\wb\tau-(q-r)^2\tau\hb\tau]=0.
\end{align}
This is a new parametrisation for the lattice CKP equation in contrast to the existed form in \cite{Kas96,Sch03}. This version of the lattice CKP equation, 
i.e. \eqref{dCKP}, can provide soliton solutions as the lattice parameters $p,q,r$ are introduced. Furthermore, we note that the left hand side of the 
equation takes the form of Cayley's $2\times2\times2$ hyperdeterminant. 

Now we consider the linear problem of the lattice CKP equation. Equation \eqref{ukDyn} in the case of the lattice CKP equation is in the form of 
\begin{align}\label{CKP:ukDyn}
 \wt\bu_k=\frac{p+\bLd}{p-\bLd}\cdot\bu_k-2p\,\wt\bU\cdot\frac{1}{p+\tbLd}\cdot\bO\cdot\frac{1}{p-\bLd}\cdot\bu_k, \quad
 (\wt\cdot,p)\leftrightarrow(\wh\cdot,q)\leftrightarrow(\wb\cdot,r).
\end{align}
The linear equation \eqref{CKP:ukDyn} provides us with 
\begin{align*}
 \psi_{-p}=\frac{1}{2p\wt V_p}(\wt\phi+\phi), \quad
 \wt\psi_a=\frac{p-a}{p+a}\psi_a-\frac{2p}{p+a}[1-(p+a)\wt S_{a,p}]\psi_{-p}
\end{align*}
together with its $(\wh\cdot,q)$ and $(\wb\cdot,r)$ counterparts, in which $\phi=(\bu_k)_0, \psi_a=(\frac{1}{a+\bLd}\cdot\bu_k)_0$. 
Now if we set $a=-q$ and eliminate the $\psi$-variables in the linear problem, we can obtain the Lax triplet of the lattice CKP equation, which is 
\begin{align}\label{CKP:Lax}
 \th\phi+\wt\phi=\frac{p+q}{p-q}\frac{\th V_q}{\wh V_q}(\phi+\wh\phi)+\frac{2q}{p-q}\frac{\th V_q}{\wt V_p}[1-(p-q)\wt S_{p,-q}](\phi+\wt\phi)
\end{align}
together with its counterparts associated with the other two directions, where the $V$- and $S$-variables can be expressed in terms of the 
$\tau$-function via Equations \eqref{CKP:V-tau} and \eqref{CKP:S-tau} respectively.

\section{Soliton solution structure}\label{S:Soliton}

We now discuss how explicit soliton solutions for the lattice AKP, BKP and CKP equations arise from the DL constructed in the previous sections. 
Actually, these soliton solutions structure arise by restricting the double integral (in the integral equation) on a domain which contains a finite 
number of singular points. In other words, a measure that can introduce a finite number of singularities is required. As we noted previously, a 3D 
integrable lattice equation may have different forms but the solution structure hidden behind them is always the same. For convenience, we only study 
soliton solution to the $\tau$-equations, i.e. the Hirota--Miwa (AKP) equation \eqref{dAKP}, the Miwa (BKP) equation \eqref{dBKP} and the Kashaev (CKP) 
equation \eqref{dCKP}. Soliton solutions to the other forms can easily be recovered via the associated discrete differential transforms. 

In the soliton reduction, we need the following finite matrices: an $N\times N'$ constant matrix $\bA$ with entries $A_{i,j}$, and a generalised Cauchy matrix 
\begin{align}\label{CauchyMat}
 \bM=(M_{j,i})_{N'\times N}, \quad M_{j,i}=\sigma_{k'_j}\Oa_{j,i}\rho_{k_i},
\end{align}
where one can use \eqref{PWF} and \eqref{Kernel} to figure out the plane wave factors $\rho_{k_i}$ and $\sigma_{k'_j}$ and the Cauchy kernel 
$\Oa_{j,i}=\Oa_{k_i,k'_j}$. Next, we impose the following condition on the measure $\rd \zeta(l,l')$: 
\begin{align}\label{Singular}
 \rd\zeta(l,l')=\sum_{i=1}^{N}\sum_{j=1}^{N'}A_{i,j}\delta(l-k_i)\delta(l'-k'_j)\rd l\rd l'.
\end{align}
In fact, $k_i$ and $k'_j$ are the singular points in the domain $D$ for soliton solutions. The condition of the measure \eqref{Singular} immediately 
gives rise to the degeneration of $\bC$ and it takes the form of 
\begin{align}\label{CFiniteSum}
 \bC=\sum_{i=1}^{N}\sum_{j=1}^{N'}A_{i,j}\rho_{k_i}\bc_{k_i}\tbc_{k'_j}\sigma_{k'_j}.
\end{align}
The degeneration helps us to get rid of the the nonlocality (i.e. the double integral) in the problem and brings us a finite summation which 
leads to soliton solutions of the 3D lattice equations in the form of a finite matrix. We conclude it as the following important statement: 
\begin{align}\label{NSoliton}
 \det(1+\bOa\cdot\bC)=\det(\bI_{N'\times N'}+\bM_{N'\times N}\bA_{N\times N'})=\det(\bI_{N\times N}+\bA_{N\times N'}\bM_{N'\times N}).
\end{align}
Either equality in Equation \eqref{NSoliton} provides us with a general formula for soliton solutions to the 3D lattice equations. In practice, once 
the operators $\fL_\cdot$, $\tfL_\cdot$ and $\fO_\cdot$ are given, one can recover the plane wave factors $\rho_k$ and $\sigma_{k'}$ as well as the 
Cauchy kernel $\Oa_{k,k'}$ and compose the generalised Cauchy matrix $\bM$. Meanwhile, the restriction on the measure $\rd \zeta(l,l')$ may require that 
the constant matrix $\bA$ satisfy certain conditions. As a result, soliton solutions can be constructed. In the following, we follow from this idea and 
give the $N$-soliton solutions to the lattice AKP, BKP and CKP equations one by one. 

\paragraph{Lattice AKP equation.}
The operators following from \eqref{AKP:Operators} in the lattice AKP equations result in the plane wave factors and the Cauchy kernel as follows: 
\begin{align}\label{AKP:PWF}
 \rho_{k_i}=(p+k_i)^n(q+k_i)^m(r+k_i)^h, \quad \sigma_{k'_j}=(p-k'_j)^{-n}(q-k'_i)^{-m}(r-k'_j)^{-h}, \quad \Oa_{j,i}=\frac{1}{k_i+k'_j}.
\end{align}
As there is no restriction on the measure $\rd \zeta(l,l')$ in the lattice AKP case, the matrix $\bA$ can be arbitrary (but non-degenerate). Therefore 
the $(N,N')$-soliton solution to the Hirota--Miwa equation \eqref{dAKP} is 
\begin{align}\label{AKP:Sol}
 \tau=\det(\bI+\bA\bM), \quad M_{j,i}=\frac{\rho_{k_i}\sigma_{k'_j}}{k_i+k'_j}, \quad i=1,2,\cdots,N,\quad j=1,2,\cdots,N'.
\end{align}

\paragraph{Lattice BKP equation.}
The plane wave factors and the Cauchy kernel for the lattice BKP equation are given by 
\begin{align}\label{BKP:PWF}
 \rho_{k_i}=\Big(\frac{p+k_i}{p-k_i}\Big)^n\Big(\frac{q+k_i}{q-k_i}\Big)^m\Big(\frac{r+k_i}{r-k_i}\Big)^h, \quad 
 \sigma_{k'_j}=\rho_{k'_j}, \quad 
 \Oa_{j,i}=\frac{1}{2}\frac{k_i-k'_j}{k_i+k'_j},
\end{align}
which follow from \eqref{BKP:Operators}. In addition, in the lattice BKP equation we impose the antisymmetry condition 
$\rd\zeta(l,l')=-\rd \zeta(l',l)$ on the measure and this leads to $A_{i,j}=-A_{j,i}$ in the matrix $\bA$, i.e. 
\begin{align}\label{BKP:Sigular}
  \rd\zeta(l,l')=\sum_{i,j=1}^{2N}A_{i,j}\delta(l-k_i)\delta(l'-k'_j)\rd l\rd l', \quad A_{i,j}=-A_{j,i}.
\end{align}
As a result, the $N$-soliton solution to the Miwa equation \eqref{dBKP} is determined by 
\begin{align}\label{BKP:Sol}
 \tau^2=\det(\bI+\bA\bM), \quad M_{j,i}=\rho_{k_i}\frac{1}{2}\frac{k_i-k'_j}{k_i+k'_j}\sigma_{k'_j}, \quad A_{i,j}=-A_{j,i}, \quad i,j=1,2,\cdots,2N.
\end{align}
As the matrices $\bA$ and $\bM$ are both antisymmetric it can be shown that the determinant $\det(\bI+\bA\bM)$ must be a perfect square. 
In other words, the $\tau$-function itself can be expressed by a Pfaffian. 

\paragraph{Lattice CKP equation.}
Now the plane wave factors and the Cauchy kernel that follow from \eqref{CKP:Operators} in the lattice CKP equation are given by 
\begin{align}\label{CKP:PWF}
 \rho_{k_i}=\Big(\frac{p+k_i}{p-k_i}\Big)^n\Big(\frac{q+k_i}{q-k_i}\Big)^m\Big(\frac{r+k_i}{r-k_i}\Big)^h, \quad 
 \sigma_{k'_j}=\rho_{k'_j}, \quad 
 \Oa_{j,i}=\frac{1}{k_i+k'_j}.
\end{align}
Due to the symmetry property of the measure $\rd \zeta(l,l')=\rd \zeta(l',l)$, one has to set $N'=N$ and impose the symmetry condition 
on $\bA$, and therefore
\begin{align}\label{CKP:Sigular}
  \rd\zeta(l,l')=\sum_{i,j=1}^{N}A_{i,j}\delta(l-k_i)\delta(l'-k'_j)\rd l\rd l', \quad A_{i,j}=A_{j,i}.
\end{align}
Clearly the $N$-soliton solution to the Kashaev equation \eqref{dCKP} can then be written in the form of 
\begin{align}\label{CKP:Sol}
  \tau=\det(\bI+\bA\bM), \quad M_{j,i}=\frac{\rho_{k_i}\sigma_{k'_j}}{k_i+k'_j}, \quad A_{i,j}=A_{j,i}, \quad i,j=1,2,\cdots,N.
\end{align}

The paper only considers some simple choices of $\fL_\fp$, $\tfL_\fp$ and $\fO_\fp$ as solutions to the closure relations \eqref{Closure} such as 
\eqref{AKP:Operators}, \eqref{BKP:Operators} and \eqref{CKP:Operators}. Other choices also exist, for example, one can refer to \eqref{AKP:ModOperator} 
and the relevant results, however, the obtained 3D lattice equations still fit into the scheme of the discrete AKP class as this has been shown in 
Section \ref{S:AKP}, which implies that the choice for the operators is not unique for a certain class of 3D integrable lattice equations. This is because 
the operators are deeply related to the Cauchy kernels and the plane wave factors which are not invariants in the integrable systems theory. In other words, 
solutions to the closure relation \eqref{Closure} do not provide a classification of 3D discrete integrable systems. In the paper, we choose the simplest 
operators generating the master discrete AKP, BKP and CKP equations. Therefore, a full classification of 3D integrable lattice equations within the DL 
framework, remains a problem.

\paragraph{Acknowledgements.} 
WF was supported by a Leeds International Research Scholarship (LIRS) as well as a small grant from the School of Mathematics. 
FWN was partially supported by EPSRC (Ref. EP/I038683/1). 

\small

\end{document}